%% daj-template.tex v0.33     23 Sep 2016   Alex Russell/Laszlo Babai
%%
%% AUTHOR: Fill in fields (or see warnings) below marked with "AUTHOR"
%% ** Add as few macro / package definitions as possible
%% ** Compile with "pdflatex"; make sure that
%%           daj.cls and tocbase.cls are in the same directory.
%%
%% EDITOR: Fill in fields below marked with "EDITOR"
%%    and check that authors proprely filled in field marked with "AUTHOR"

\documentclass{daj}

%Authors' packages and macros:
\usepackage{amsmath,amsfonts,amssymb,amsthm,eucal,color,bbm,verbatim,graphicx}

\newtheorem{thm}{Theorem}
\newtheorem{cor}[thm]{Corollary}
\newtheorem{lemma}[thm]{Lemma}
\newtheorem{prop}[thm]{Proposition}

\newtheorem{conj}[thm]{Conjecture}
\theoremstyle{definition}
\newtheorem*{rmk}{Remark}

\newcommand{\R}{\mathbb{R}}
\newcommand{\E}{\mathbb{E}}
\newcommand{\Prob}{\mathbb{P}}
\newcommand{\N}{\mathbb{N}}

\newcommand{\Z}{\mathbb{Z}}
\newcommand{\C}{\mathbb{C}}

\newcommand{\abs}[1]{\left\vert #1 \right\vert}
\newcommand{\norm}[1]{\left\Vert #1 \right\Vert}

\newcommand{\dfn}[1]{\textbf{#1}}
\newcommand{\eps}{\varepsilon}

\DeclareMathOperator{\var}{Var}

\newcommand{\Set}[2]{\left\{#1 \mathrel{} \middle| \mathrel{} #2
  \right\}}
\DeclareMathOperator{\diam}{diam}
\newcommand{\ind}[1]{\mathbbm{1}_{#1}}

%%%%%%%%%%%%%%%%%%%%%%%%%%%%%%%%%%%%%%%%%%%%%%%%
%% AUTHOR: Fill in meta-data below:
\dajAUTHORdetails{%
  title = {Self-Similarity in the Circular Unitary Ensemble}, %% please capitalize all significant words
  author = {Elizabeth S.\ Meckes and Mark W.\ Meckes},
    %% Please use the format for commas as follows:
    %% "A", or "A and B", or "A, B, and C", or "A, B, C, and D", etc.
  plaintextauthor = {Elizabeth S. Meckes, Mark W. Meckes},
    %% An author list in plain text: Use the format
    %% "A", or "A, B", or "A, B, C", etc.
    %% NOTE: No LaTeX code in author names.
    %% NOTE: No "and" at the end--simply comma separated,
    % 
 %% The remaing lines in this section are optional:
    %
    %% IF YOUR TITLE CONTAINS MATH OR LATEX such as accented characters: 
    %% Add a "plain text title";  otherwise comment out the next line:
%  plaintexttitle = {Short Proof of Rodl's n**loglog n Bound}, %%  title without math or LaTeX
    %
    %% ONLY IF YOUR TITLE IS TOO LONG to fit in the page headers, please 
    %% add an abbreviated version of the title, otherwise comment it out:
%  runningtitle = {R\"odl's $n^{\log\log n}$ Bound}, 
    %
    %% ONLY IF YOUR AUTHOR LIST IS TOO LONG to fit in the page headers, 
    %% add an abbreviated version, otherwise comment it out:
%  runningauthor = {Paul Erd\H{o}s, Johan H{\aa}stad, L\'aszl\'o Lov\'asz, and Andrew C-C. Yao},
    %% you can replace first names and/or middle names with initials.
    %
    %% ONLY IF YOUR AUTHOR LIST IS TOO LONG to fit the copyright entry
    %% on the bottom of the front page,
    %% add an abbreviated version, otherwise comment it out:
%  copyrightauthor = {P. Erd\H{o}s, J. H{\aa}stad, L. Lov\'asz, and A. C-C. Yao},
    %% Note that the copyrightauthor  field will seldom be necessary;
    %% for instance, in this example with four authors, it would be 
    %% all right to comment it out and have all authors' full names 
    %% appear on the Copyright line
   %
   %% Include keywords of your choice: comma separated, lower case;
   %% comment out the "keywords" line if you don't wish to provide them
  keywords = {random unitary matrices, Haar measure, eigenvalues,
    self-similarity, determinantal point processes},
}   %%% END \dajAUTHORdetails

%%%%%%%%%%%%%%%%%%%%%%%%%%%%%%%%%%%%%%%%%%%%%%%%
%%% EDITOR: please fill in the following data:
\dajEDITORdetails{%
   year={2016},
   %volume={XX},
   number={9},
   received={20 October 2015},   % received date: example: 7 January 2017
   revised={10 June 2016},    % Optional revised date (you may comment it out)
   published={15 June 2016},  % published date
   doi={10.19086/da.736},       % XXX = number of paper, e.g. da006 for paper#6
%                              % or  da0006 (length of string arbitrary)
}   %%% END \dajEDITORdetails

\begin{document}

\begin{frontmatter}[classification=text]
%% EDITOR: this will force the keywords to appear right after the Abstract.
%%   If the abstract is too long and would force the keywords off the
%%   front page, please comment out % [classification=text] above
%%   This way the keywords will be floated on the bottom of the first page
%%   even though the Abstract spills over to the next page.

%%% AUTHOR: Title goes here.  This line is optional.  You must use it
%%   if title has footnote attached or requires nontrivial typesetting,
%%   e.g., inclusion of linebreaks to force nice layout.
%\title{Short Proof of R\"odl's $n^{\log\log n}$ Bound\footnote{This is a footnote to the title}} %% please capitalize all significant words

%%% AUTHOR:
%%% List all authors. If you wish, place grant acknowledgements in \thanks.
%%% In brackets include a short tag for each author.
\author[elizabeth]{Elizabeth S.\ Meckes%\thanks{Supported by...}
%Left support in separate Acknowledgements section - MM
}
\author[mark]{Mark W.\ Meckes%\thanks{Supported by...}
%Left support in separate Acknowledgements section - MM
}
%\author[laci]{L\'aszl\'o Lov\'asz\thanks{Supported by...}}
%\author[andy]{Andrew Chi-Chih Yao\thanks{Supported by...}}

%%% AUTHOR: Abstract goes here
\begin{abstract}
  This paper gives a rigorous proof of a conjectured statistical
  self-similarity property of the eigenvalues random matrices from the
  Circular Unitary Ensemble.  We consider on the one hand the
  eigenvalues of an $n \times n$ CUE matrix, and on the other hand
  those eigenvalues $e^{i\phi}$ of an $mn \times mn$ CUE matrix with
  $\abs{\phi} \le \pi / m$, rescaled to fill the unit circle.  We show
  that for a large range of mesoscopic scales, these collections of
  points are statistically indistinguishable for large $n$.  The proof
  is based on a comparison theorem for determinantal point processes
  which may be of independent interest.
\end{abstract}
\end{frontmatter}

%%% AUTHOR: body of paper starts here

%Moved remaining template material after \end{document} - MM

\section{Introduction}

The set of $N\times N$ unitary matrices is a compact Lie group, and as
such, possesses a unique probability measure which is invariant under left-
and right-translation (called Haar measure).  In random matrix theory, the unitary group
together with Haar probability measure is called the circular unitary
ensemble (CUE).  The word \emph{circular} refers to the fact that all of the
eigenvalues of a CUE matrix lie on the unit circle in the complex plane.

There has long been a folklore conjecture that the distribution of the
eigenvalues of a CUE random matrix has a self-similar structure.  For
example, in their
statistical analysis \cite{CD03} of CUE eigenvalues and zeroes of the
Riemann zeta function, Coram and Diaconis hypothesized that the
following may hold:
\begin{conj}\label{C:Coram-Diaconis}
  Let $U$ be an $N\times N$ random matrix from the CUE with
  eigenvalues $\{e^{i\theta_j}\}_{1\le j\le N}$, where $0 \le \theta_1
  \le \dots \le \theta_N < 2\pi$.  Choose an eigenvalue
  $e^{i\theta_K}$ uniformly, and let $T$ be the length of the
  counter-clockwise circular
  arc from $\theta_K$ to $\theta_{K+k}$, 
  where the indices are interpreted modulo $N$.  Let $\phi\in[0,2\pi)$
  be a uniformly chosen random angle, independent of $U$.  If $k$ and
  $N$ are both large, then the random set of points
  \[
  \left\{e^{i\left(\phi+\frac{2\pi\theta_j}{T}\right)}\right\}_{K\le
    j<K+k}
  \]
  is statistically indistinguishable from the eigenvalues of a
  $k\times k$ random matrix from the CUE.
\end{conj}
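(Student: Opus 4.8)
The plan is to recast the conjectured self-similarity as a quantitative comparison between two explicit determinantal point processes on the circle, and then to estimate how close they are when observed through a mesoscopic window. I will prove the precise (mesoscopic) version stated in the abstract, which I take to be the rigorous content of Conjecture~\ref{C:Coram-Diaconis}.

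\emph{Step 1 (reduction to a kernel comparison).} The eigenangles of an $N\times N$ CUE matrix form a determinantal point process on $\R/2\pi\Z$ with correlation kernel, relative to Lebesgue measure,
\[
S_N(\theta,\phi)=\frac{1}{2\pi}\sum_{j=0}^{N-1}e^{ij(\theta-\phi)}=\frac{1}{2\pi}\,e^{i(N-1)(\theta-\phi)/2}\,\frac{\sin\bigl(N(\theta-\phi)/2\bigr)}{\sin\bigl((\theta-\phi)/2\bigr)}.
\]
By rotation invariance of Haar measure I may fix the random angle to be $0$ and use the fixed-width formulation of the abstract: on one side the process $\Xi_n$ with kernel $S_n$; on the other the image $\Xi_{m,n}$ of $\Set{\phi}{e^{i\phi}\in\operatorname{spec}(U),\ \abs{\phi}\le\pi/m}$ under $\phi\mapsto m\phi$, with $U$ an $mn\times mn$ CUE matrix. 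Restricting a determinantal process to a Borel set and pushing it forward by a diffeomorphism again give determinantal processes, so $\Xi_{m,n}$ is determinantal on $(-\pi,\pi)$ with kernel $K_{m,n}(\psi,\xi)=m^{-1}S_{mn}(\psi/m,\xi/m)$. Because correlation functions depend on a kernel only through the determinants $\det[K(x_i,x_j)]$, and these are unchanged if $K$ is multiplied by a unimodular gauge factor, I may replace $S_n$ and $K_{m,n}$ by the gauge-equivalent real, symmetric, translation-invariant kernels
\[
\widetilde S_n(t)=\frac{1}{2\pi}\,\frac{\sin(nt/2)}{\sin(t/2)},\qquad \widetilde K_{m,n}(t)=\frac{1}{2\pi m}\,\frac{\sin(nt/2)}{\sin\bigl(t/(2m)\bigr)} .
\]
Both equal $n/(2\pi)$ at $t=0$, so $\Xi_n$ and $\Xi_{m,n}$ have the same intensity, and $\widetilde K_{1,n}=\widetilde S_n$. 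It now suffices to show that for an arc $I$ of length $\ell=\ell_n$ in a suitable range, $d_{\mathrm{TV}}\bigl(\Xi_n|_I,\,\Xi_{m,n}|_I\bigr)\to 0$ as $n\to\infty$, uniformly in $m$.

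\emph{Step 2 (a comparison theorem for determinantal processes).} The engine is the result flagged in the abstract as being of independent interest: if $P,Q$ are determinantal processes with locally trace-class self-adjoint kernels $K,L$ and $B$ is a bounded Borel set, then
\[
d_{\mathrm{TV}}\bigl(P|_B,\,Q|_B\bigr)\ \lesssim\ \Phi\!\left(\,\norm{\ind{B}(K-L)\ind{B}},\ \tr(\ind{B}K\ind{B})+\tr(\ind{B}L\ind{B})\,\right)
\]
for an appropriate operator norm and an explicit $\Phi$ that is small when the kernels are close and the expected number of points is not too large. To prove it I would expand $d_{\mathrm{TV}}(P|_B,Q|_B)$ through the Janossy densities (equivalently, through the correlation functions), write each as a determinant built from $K$ or $L$, bound the determinant differences by a Hadamard/telescoping estimate in terms of $\norm{\ind{B}(K-L)\ind{B}}$ and $\norm{\ind{B}K\ind{B}},\norm{\ind{B}L\ind{B}}$, and sum the resulting series; an alternative, probably with cleaner constants, couples $P|_B$ with $Q|_B$ via the spectral decompositions of $\ind{B}K\ind{B}$ and $\ind{B}L\ind{B}$ (mixtures of projection processes). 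The decisive---and, I expect, hardest---point is to arrange that $\Phi$ grows only \emph{polynomially}, not exponentially, in the expected number of points in $B$; this is exactly what leaves room for a nontrivial range of mesoscopic scales.

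\emph{Step 3 (kernel estimate and conclusion).} Granting Step~2, what remains is elementary. From $1/\sin(t/2)=2/t+t/12+O(t^3)$ one gets $\bigl|\,m^{-1}\csc(t/(2m))-\csc(t/2)\,\bigr|=\tfrac{1}{12}\abs{1-m^{-2}}\abs{t}+O(\abs{t}^3)\lesssim\abs{t}$, uniformly in $m\ge 1$, whence on any arc $I$ of length $\ell\le\pi$,
\[
\bigl|\widetilde K_{m,n}(\psi-\xi)-\widetilde S_n(\psi-\xi)\bigr|\ \lesssim\ \bigl|\sin\bigl(n(\psi-\xi)/2\bigr)\bigr|\,\abs{\psi-\xi}\ \le\ \abs{\psi-\xi}\ \le\ \ell .
\]
Squaring and integrating over $I\times I$ gives $\norm{\ind{I}(\widetilde K_{m,n}-\widetilde S_n)\ind{I}}_{\mathrm{HS}}\lesssim\ell^{2}$, while $\tr(\ind{I}\widetilde S_n\ind{I})=\tr(\ind{I}\widetilde K_{m,n}\ind{I})=n\ell/(2\pi)$. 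Feeding these into the comparison theorem gives $d_{\mathrm{TV}}(\Xi_n|_I,\Xi_{m,n}|_I)\to 0$ for every $\ell_n$ in a range of the form $1/n\ll\ell_n\ll n^{-c}$, where $c\in(0,1)$ is fixed by the polynomial growth allowed in $\Phi$ (heuristically $c=\tfrac15$): the lower bound makes the windows carry a growing number of points, so the statement is nonvacuous, and the upper bound controls the error---a ``large range of mesoscopic scales.'' To reconnect with the literal statement of Conjecture~\ref{C:Coram-Diaconis}, where one sees exactly $k$ consecutive eigenvalues rescaled to fill the circle, I would condition the width-$2\pi/m$ arc on containing exactly $n$ points: on a mesoscopic sub-window the local count is asymptotically independent of the global count over the macroscopic arc, so this conditioning changes the law by $o(1)$ in total variation and the same comparison applies; alternatively one reruns Step~3 with $S_k$ in place of $S_n$ and the conditioned kernel in place of $K_{m,n}$.
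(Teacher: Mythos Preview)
Your overall architecture is exactly that of the paper: reduce to an explicit comparison of two translation-invariant determinantal kernels, prove a general DPP comparison theorem, and then feed in the elementary Taylor estimate for $\frac{1}{\sin(t/2)}-\frac{1}{m\sin(t/(2m))}$. Steps~1 and~3 match the paper almost line for line, including the gauge choice of the real Dirichlet kernel and the $O(t)$ bound on the kernel difference.

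The substantive divergence is in Step~2, and here there is both a scope issue and a gap.  First, the paper does \emph{not} compare the full point-process laws $\Xi_n|_I$ and $\Xi_{m,n}|_I$; its main theorem bounds only $d_{TV}$ and $W_1$ between the \emph{counting random variables} $\mathcal{N}_A$ and $\mathcal{N}_A^{(m)}$.  That reduction in ambition is what makes the comparison theorem clean: by the Hough--Krishnapur--Peres--Vir\'ag lemma the count is distributed as $\sum_j\xi_j$ with $\xi_j$ independent Bernoulli$(\lambda_j)$, $\{\lambda_j\}$ the eigenvalues of the restricted integral operator.  A monotone coupling of the two Bernoulli families gives $W_1(\mathcal{N},\widetilde{\mathcal{N}})\le\sum_j|\lambda_j-\widetilde\lambda_j|$, and then Cauchy--Schwarz together with the Hoffman--Wielandt inequality turns this into $\sqrt{N}\,\|K-\widetilde K\|_{L^2(A\times A)}$, where $N$ is a \emph{deterministic} upper bound on the count (here $N=mn$).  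Your second alternative (``couple via the spectral decompositions'') is pointing at exactly this, but you are aiming at the full law rather than the count, and you have not carried it out; the paper's argument would not as written deliver the process-level statement you claim.

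Second, your first proposed route to the comparison theorem---expanding Janossy densities and telescoping determinant differences---almost certainly fails for the reason you flag yourself: the na\"ive determinant bounds give factorial/exponential dependence on the number of points in the window, not polynomial, and so would wipe out the entire mesoscopic range.  Your heuristic exponent $c=\tfrac15$ comes from using the \emph{expected} count $n\ell$ in place of $N$; the paper uses the crude bound $N=mn$, obtains $W_1\le \tfrac{1}{6\pi}\sqrt{mn}\,|A|\,\diam A$, and hence the threshold $\diam A\ll n^{-1/4}$.  The authors explicitly conjecture that the $\sqrt{n}$ is an artifact of the Cauchy--Schwarz step and could be removed by estimating the trace-class norm $\sum_j|\lambda_j-\widetilde\lambda_j|$ directly, but they do not succeed in doing so---so your better exponent remains heuristic here as well.
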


That is, a random choice of $k$ sequential eigenvalues of an $N\times
N$ CUE matrix $U$,
rescaled and randomly rotated, is indistinguishable from the full set
of eigenvalues of a $k\times k$ random matrix.

Aside from statistical evidence for the conjecture, there is a result
of E.\ Rains \cite{Ra03} which is suggestive of this kind of
self-similarity.  Suppose that $U$ is an $N\times N$ random CUE
matrix, with $N=nk$; Rains proved that the distribution of the
eigenvalues of $U^n$ is \emph{exactly} that of the collection of
eigenvalues of $n$ independent $k\times k$ random CUE matrices.  That
is, wrapping the eigenvalues of $U$ around the circle $n$ times
produces $n$ independent copies of the $k$ eigenvalues of a
$k\times k$ random matrix.  It is tempting to view each of those
collections of $k$ eigenvalues as coming from one of the $n$ arcs of
the circle that gets stretched to cover the circle once (this is not
at all the way Rains' theorem is actually proved).  If this intuition
were correct, it would illustrate exactly the kind of self-similarity
conjectured by Coram and Diaconis.

\bigskip

In this paper, we give a rigorous proof of a version of the
self-similarity conjecture.  The following notation is used
throughout.  Let $U$ be an $n\times n$ random CUE matrix with
eigenvalues $\{e^{i\theta_j}\}_{1\le j\le n}$, with
$\theta_j \in [-\pi,\pi)$ for each $j$. (It is a matter of technical
convenience to take the arguments of the eigenvalues to be in
$[-\pi,\pi)$ here instead of in $[0,2\pi)$ as in Conjecture
\ref{C:Coram-Diaconis}.) For $A \subseteq [-\pi,\pi)$,
$\mathcal{N}_{n,A}$ denotes the number of eigenangles $\theta_j$ which
lie in $A$; we generally omit the $n$ and write $\mathcal{N}_A$.  For
$\theta \in [0,\pi)$, $\mathcal{N}_{[-\theta,\theta]}$ is denoted by
$\mathcal{N}_\theta$.  For $m\ge 1$, let $U^{(m)}$ be an $nm\times nm$
random CUE matrix with eigenvalues $\{e^{i\phi_j}\}_{1\le j\le nm}$,
with $\phi_j \in [-\pi,\pi)$ for each $j$, and let
\[
\mathcal{N}_{n,A}^{(m)} = \mathcal{N}_A^{(m)} := \#\Set{j}{\phi_j \in
  \left[-\frac{\pi}{m},\frac{\pi}{m}\right),\ m\phi_j \in A};
\]
$\mathcal{N}_A^{(m)}$ counts the random points in $A$ of the point
process consisting of the eigenvalues of $U^{(m)}$ in the arc of
length $\frac{2\pi}{m}$ about $1$, and rescaling to fill out the whole
circle.  While the total number of eigenvalues in this arc is random,
it concentrates strongly at its expected value of $n$.  In the context
of the Diaconis--Coram conjecture, our $nm$ plays the role of $N$ and
$n$ plays the role of $k$.
\begin{thm}
  \label{T:TV-bound}
  Suppose that $m, n \ge 1$, and that $A \subseteq [-\pi, \pi)$ has
  diameter $\diam A \le \pi$.  Then
  \[
  d_{TV} \left( \mathcal{N}_A, \mathcal{N}^{(m)}_A \right) 
  \le W_1 \left( \mathcal{N}_A, \mathcal{N}^{(m)}_A \right) 
  \le \frac{\sqrt{mn} \abs{A} \diam A}{6\pi},
  \]
  where $d_{TV}(\cdot,\cdot)$ denotes total variation distance between
  random variables, $W_1(\cdot,\cdot)$ denotes $L^1$-Wasserstein
  distance, and $\abs{A}$ denotes the Lebesgue measure of $A$.
\end{thm}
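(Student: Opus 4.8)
The plan is to realize both $\mathcal N_A$ and $\mathcal N_A^{(m)}$ as counting functions of determinantal point processes on $[-\pi,\pi)$, and then to compare those counting functions via a general coupling bound for counts of determinantal processes — this coupling bound is the comparison theorem alluded to in the abstract. The eigenangles of $U$ form a determinantal point process with correlation kernel $K_n(x,y)=\frac1{2\pi}\frac{\sin(n(x-y)/2)}{\sin((x-y)/2)}$, the kernel of the orthogonal projection in $L^2([-\pi,\pi))$ onto the span of $n$ consecutive exponentials $e^{ikx}$. Restricting the eigenangle process of $U^{(m)}$ to the arc $[-\pi/m,\pi/m)$ and pushing it forward under $x\mapsto mx$ again yields a determinantal process (the pushforward of a determinantal process by a diffeomorphism is determinantal, with the correspondingly transformed kernel), now with kernel $\widetilde K_{m,n}(x,y)=\frac1m K_{mn}(x/m,y/m)=\frac1{2\pi}\frac{\sin(n(x-y)/2)}{m\sin((x-y)/(2m))}$ (up to an irrelevant unimodular conjugation, under which the DPP and all the spectral quantities below are invariant); the operator $\widetilde{\mathcal K}_{m,n}$ with this kernel is unitarily equivalent to the compression of a rank-$mn$ projection to a subinterval, hence a positive contraction of rank at most $mn$. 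With this setup, $\mathcal N_A$ and $\mathcal N^{(m)}_A$ are exactly the numbers of points of these two processes that fall in $A$. The first inequality $d_{TV}\le W_1$ is then immediate, since both variables are $\Z_{\ge0}$-valued: under any coupling, $\Prob[X\ne Y]=\Prob[\,|X-Y|\ge1\,]\le\E|X-Y|$, so the infimum over couplings gives $d_{TV}\le W_1$.

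For the main inequality I would invoke the classical fact that the number of points of a determinantal process with Hermitian kernel $\mathcal K$ in a Borel set $A$ has the same law as a sum $\sum_j\xi_j$ of independent Bernoulli variables whose parameters $\lambda_j\in[0,1]$ are the eigenvalues of the compression $\mathcal K|_A$ of the kernel operator to $L^2(A)$; for us both $\mathcal K_n|_A$ and $\widetilde{\mathcal K}_{m,n}|_A$ are finite-rank positive contractions, so only finitely many parameters are nonzero. Let $\lambda_1\ge\lambda_2\ge\cdots$ be the eigenvalues of $\mathcal K_n|_A$ and $\mu_1\ge\mu_2\ge\cdots$ those of $\widetilde{\mathcal K}_{m,n}|_A$, padded by zeros. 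Coupling the $j$-th Bernoulli pair maximally, so that $\E\,|\xi_j-\xi_j'|=|\lambda_j-\mu_j|$, and taking these couplings independent across $j$, produces a coupling of $\mathcal N_A$ and $\mathcal N^{(m)}_A$ witnessing
\[
W_1\bigl(\mathcal N_A,\mathcal N^{(m)}_A\bigr)\;\le\;\sum_j|\lambda_j-\mu_j|\;\le\;\bigl\|\,\mathcal K_n|_A-\widetilde{\mathcal K}_{m,n}|_A\,\bigr\|_1 ,
\]
the last inequality being the Lidskii trace-norm estimate $\sum_j|\lambda_j^{\downarrow}(S)-\lambda_j^{\downarrow}(T)|\le\|S-T\|_1$ for Hermitian $S,T$. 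This isolates the comparison theorem and reduces everything to bounding one trace norm.

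To finish, set $D:=\mathcal K_n|_A-\widetilde{\mathcal K}_{m,n}|_A$. Since $D$ has rank at most $n+mn\le2mn$, we have $\|D\|_1\le\sqrt{2mn}\,\|D\|_{HS}$, where
\[
\|D\|_{HS}^2=\int_A\!\!\int_A\bigl|K_n(x,y)-\widetilde K_{m,n}(x,y)\bigr|^2\,dx\,dy .
\]
Writing $\psi(t):=\csc t-\tfrac1t$ on $(0,\pi)$, a short computation gives $K_n(x,y)-\widetilde K_{m,n}(x,y)=\frac{\sin(n(x-y)/2)}{2\pi}\bigl[\psi(\tfrac{x-y}{2})-\tfrac1m\psi(\tfrac{x-y}{2m})\bigr]$. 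The hypothesis $\diam A\le\pi$ forces $|x-y|\le\pi$ for $x,y\in A$, so every argument of $\psi$ lies in $(0,\pi/2]$, where $\psi$ is nonnegative and increasing; hence (using $m\ge1$) the bracket is sandwiched between $0$ and $\psi(\tfrac{|x-y|}{2})$, and with the elementary bound $\psi(t)\le t/3$ on $(0,\pi/2]$ together with $|\sin(\cdot)|\le1$ one obtains the pointwise estimate $|K_n(x,y)-\widetilde K_{m,n}(x,y)|\le\frac{|x-y|}{12\pi}$. Therefore $\|D\|_{HS}^2\le\frac1{(12\pi)^2}\int_A\int_A(x-y)^2\,dx\,dy\le\frac{(\diam A)^2|A|^2}{(12\pi)^2}$, and so
\[
W_1\bigl(\mathcal N_A,\mathcal N^{(m)}_A\bigr)\le\|D\|_1\le\frac{\sqrt{2mn}\,|A|\,\diam A}{12\pi}\le\frac{\sqrt{mn}\,|A|\,\diam A}{6\pi},
\]
as claimed. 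The main obstacle is exactly this last step: one needs an estimate on the difference of the two CUE kernels that is uniform in $m$ and decays linearly in $|x-y|$ on the relevant range $|x-y|\le\pi$, which is where the specific structure of the kernel and the restriction $\diam A\le\pi$ genuinely enter (via the sign, monotonicity, and Taylor behavior of $\psi$ near $0$). Everything else — the Bernoulli representation, the coupling and subadditivity of $W_1$, the Lidskii inequality, and the rank-versus-Hilbert--Schmidt bound — is soft; the remaining routine checks are that the Bernoulli representation genuinely applies to the finite-rank compressed kernels and that the rescaling of the eigenangle process of $U^{(m)}$ is carried out correctly.
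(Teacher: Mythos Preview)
Your proof is correct and follows essentially the same route as the paper's: identify both kernels, use the Bernoulli representation of the counting function to bound $W_1$ by $\sum_j|\lambda_j-\mu_j|$, pass to the Hilbert--Schmidt norm of the kernel difference on $A\times A$, and finish with the same Taylor-type estimate $\csc\varphi-\bigl(m\sin(\varphi/m)\bigr)^{-1}\le\varphi/3$ for $0<\varphi\le\pi/2$. The only cosmetic difference is that you reach the Hilbert--Schmidt norm via the Lidskii trace-norm inequality followed by a rank bound, whereas the paper applies Cauchy--Schwarz first and then the Hoffmann--Wielandt inequality; both routes yield $\sqrt{N}\,\|K-\widetilde K\|_{L^2(A\times A)}$ with comparable $N$ and hence the stated constant.
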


For context, recall that $\E\mathcal{N}_A=\frac{n\abs{A}}{2\pi}$; the
same is true for $\mathcal{N}_A^{(m)}$.

In the statement of Theorem \ref{T:TV-bound}, and all of the following
results, precise constants are included for concreteness, with no
claims as to their sharpness.  The definitions of
$d_{TV}(\cdot,\cdot)$ and $W_1(\cdot,\cdot)$ are recalled at the end
of this section.

Theorem \ref{T:TV-bound} was stated with the implicit assumption that
$m$ is an integer, since it is in that case that it relates directly
to Conjecture \ref{C:Coram-Diaconis}.  However, it is only strictly
necessary that $mn$ is an integer, and a slight refinement of the
proof shows that for any $m \ge 1$, if $mn \in \N$, then
\begin{equation}
  \label{E:TV-bound-improved}
  d_{TV} \left( \mathcal{N}_A, \mathcal{N}^{(m)}_A \right) 
  \le W_1 \left( \mathcal{N}_A, \mathcal{N}^{(m)}_A \right) 
  \le C \left(1-\frac{1}{m^2}\right)\sqrt{mn} \abs{A} \diam A.
\end{equation}
In particular, this yields the comparison
\[
  d_{TV} \left( \mathcal{N}_{n,A}, \mathcal{N}_{n+1, \frac{n}{n+1}A} \right) 
  \le W_1 \left( \mathcal{N}_{n,A}, \mathcal{N}_{n+1, \frac{n}{n+1}A}\right) 
  \le C \frac{\abs{A} \diam A}{\sqrt{n}}
\]
between $n \times n$ CUE eigenvalues and $(n+1)\times (n+1)$ CUE
eigenvalues.

\medskip

As a consequence of Theorem \ref{T:TV-bound},  
if $\{A_n\}$ is a sequence of sets such that either $\diam A_n =
o(n^{-1/4})$ or $\abs{A_n} = o(n^{-1/2})$ as $n \to \infty$, then
\[
d_{TV} \left( \mathcal{N}_{A_n}, \mathcal{N}^{(m)}_{A_n} \right), 
W_1 \left(\mathcal{N}_{A_n}, \mathcal{N}^{(m)}_{A_n} \right) \to 0.
\]
Thus indeed, a sequential arc of about $n$ of the $nm$ eigenvalues of
an $nm\times nm$ random matrix is statistically indistinguishable, on
the scale of $o(n^{-1/4})$ for diameter or $o(n^{-1/2})$ for Lebesgue
measure, from the $n$ eigenvalues of an $n\times n$ random matrix.

A remarkable feature of Theorem \ref{T:TV-bound} is that it yields
\emph{microscopic} information even at a \emph{mesoscopic} scale: if
$\frac{1}{n}\ll\diam A_n\ll\frac{1}{n^{1/4}} $, then
$\mathcal{N}_{A_n}$ and $\mathcal{N}^{(m)}_{A_n}$ both have
expectations and variances tending to infinity (as follows from Lemma
\ref{T:variance-log} below).  One would thus typically try to
understand the point processes at these scales by studying statistical
properties of the recentered and rescaled counts, rather than try to
observe individual points.  Here, we are able to make direct
point-by-point comparisons of the two point processes treated as
discrete objects, with no rescaling or continuous approximations.

The fact that we are able to compare the two point processes with no
rescaling certainly suggests that we are witnessing a true
self-similarity phenomenon which is a special feature of the structure
of the eigenvalues of CUE random matrices.  However, one should be
careful to check that the two point processes are not similar simply
because they have the same limit.  Indeed, Wieand \cite{Wieand} and
Soshnikov \cite{So00} showed that
\[
\frac{\mathcal{N}_\theta - \E \mathcal{N}_\theta}{\sqrt{\var
    \mathcal{N}_\theta}} \Rightarrow N(0,1)
\]
as $n \to \infty$ for fixed $\theta$; the same then follows for
$\mathcal{N}_\theta^{(m)}$. Figure \ref{F:simulations} gives a
convincing visual illustration that $\mathcal{N}_A$ and
$\mathcal{N}_A^{(m)}$ resemble each other more closely than either
resembles a Gaussian distribution; a rigorous proof of this fact is
given in Proposition \ref{T:counting-function-clt} below.

\begin{figure}
  \centering
  \includegraphics[width=2.4in]{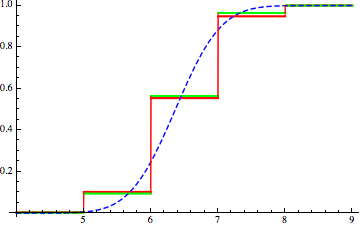}\hspace{.25in}
  \includegraphics[width=2.4in]{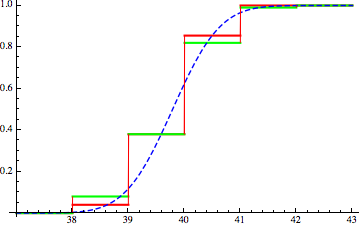}

  \begin{tabular}{|c|c|}
    \hline
    \multicolumn{2}{|c|}{Kolmogorov--Smirnov statistics} \\
    \hline
    $\mathcal{N}_{0.2}$ to Gaussian & 0.329 \\
    $\mathcal{N}^{(2)}_{0.2}$ to Gaussian & 0.319 \\
    $\mathcal{N}_{0.2}$ to $\mathcal{N}^{(2)}_{0.2}$ & 0.016 \\
    \hline
  \end{tabular}\hspace{.25in}
  \begin{tabular}{|c|c|}
    \hline
    \multicolumn{2}{|c|}{Kolmogorov--Smirnov statistics} \\
    \hline
    $\mathcal{N}_{0.25}$ to Gaussian & 0.262 \\
    $\mathcal{N}^{(2)}_{0.25}$ to Gaussian & 0.262 \\
    $\mathcal{N}_{0.25}$ to $\mathcal{N}^{(2)}_{0.25}$ & 0.04 \\
    \hline
  \end{tabular}

  \caption{ \textsc{Left:} Simulated cumulative distribution functions (500
    trials) for $\mathcal{N}_{0.2}$ (red) and
    $\mathcal{N}^{(2)}_{0.2}$ (green), with $n = 100$. 
    \newline
    \textsc{Right:} Simulated cumulative distribution functions (200 trials)
    for $\mathcal{N}_{0.25}$ (red) and $\mathcal{N}^{(2)}_{0.25}$
    (green), with $n = 500$.  
    \newline 
    The dotted lines show
    Gaussian cumulative distribution functions with mean equal to the
    theoretical mean of both $\mathcal{N}_\theta$ and
    $\mathcal{N}^{(2)}_\theta$ (i.e., $\frac{20}{\pi} \approx 6.4$ and
    $\frac{125}{\pi} \approx 39.8$, respectively) and variance equal
    to the average of the two corresponding sample variances.}
  \label{F:simulations}
\end{figure}

We conjecture that a comparable result to Theorem \ref{T:TV-bound}
holds without the restriction on $\diam A$, and that the factor of
$\sqrt{n}$ in the right hand side is an artifact of our proof; this
would imply that $\mathcal{N}_{A_n}$ and $\mathcal{N}_{A_n}^{(m)}$
become indistinguishable as long as $\abs{A_n} \to 0$.  For more
details, see the remark at the end of Section \ref{S:dpp}.  On the
other hand, we do not expect such a result to hold for sets of
constant size; i.e., independent of $n$.  For example, Rains
\cite{Ra97} gives precise asymptotics for $\var \mathcal{N}_\theta$
for $n \to \infty$ and $\theta$ fixed, which show that
$\var \mathcal{N}_\theta$ and $\var \mathcal{N}_\theta^{(m)}$ are not
asymptotically equal.  This suggests (but does not formally imply)
that Theorem \ref{T:TV-bound} does not hold in this setting.  Rains'
estimate does show that Proposition \ref{T:variances-almost-equal}
below on the asymptotic equality of variances does not extend to that
regime.

%have mean and 
%fluctuations growing with $n$ (see Proposition \ref{T:variance-log} below)

We expect that a version of Theorem \ref{T:TV-bound} holds for the
other circular ensembles of random matrix theory; however, our
approach is via the determinantal structure of the eigenvalue process
for the CUE, which is not present outside the unitary case.

Finally, some comments on the relationship between Conjecture
\ref{C:Coram-Diaconis} and Theorem \ref{T:TV-bound} are in order.  The
models of self-similarity being used are not identical; in Conjecture
\ref{C:Coram-Diaconis}, exactly $k+1$ sequential eigenvalues are
selected and stretched as needed to make the first and last meet,
resulting in exactly $k$ random points.  In Theorem \ref{T:TV-bound},
the eigenvalues from an arc making up a fixed fraction of the circle
are chosen and that arc is stretched (deterministically) to cover the
whole circle; the resulting total number of points is random.
However, in the mesoscopic regime the two models are essentially the
same.  The idea is the following: eigenvalue rigidity (see Lemma 10 of
\cite{MM-powers}) implies that the difference between the
$j^\mathrm{th}$ and the $(j+n)^\mathrm{th}$ eigenangles of an
$nm\times nm$ CUE matrix is about
$\frac{2\pi}{m} + O\bigl(\frac{\sqrt{\log n}}{n}\bigr)$ with high
probability. So whereas Theorem \ref{T:TV-bound} considers the
eigenangles of an $nm\times nm$ matrix in an interval of length
$\theta / m$, Conjecture \ref{C:Coram-Diaconis} suggests considering
the eigenangles in an interval whose length is random but typically
about
$\frac{\theta}{m} + \theta O\bigl(\frac{\sqrt{\log n}}{n}\bigr)$.  But
if $\theta \ll \frac{1}{\sqrt{\log n}}$, then with extremely high
probability an interval of length $\theta \frac{\sqrt{\log n}}{n}$
contains no eigenangles, and so the corresponding counts are the same.

\bigskip

The rest of this paper is organized as follows.  In Section
\ref{S:dpp} we give the background and general results on
determinantal point processes needed to prove Theorem
\ref{T:TV-bound}, followed by the proof of the theorem and a corollary
giving a rate for the classical convergence of the eigenvalue process
to the sine kernel process on a microscopic scale.  In Section
\ref{S:variances} we give precise asymptotics for the variances of the
counting functions.  As a consequence, we are able to identify a sharp
rate of convergence in the central limit theorem mentioned above,
which is in particular much slower than the merging of distributions
in Theorem \ref{T:TV-bound}.  We also show that the variances of the
counting functions of the two processes are asymptotically equal
throughout the entire mesoscopic regime, giving a rigorous proof of
another manifestation of the self-similarity phenomenon.  Finally,
Section \ref{S:correlations} gives a surprising comparison between the
joint intensities of the eigenvalues processes for $U$ and $U^{(m)}$.

We conclude this section with a brief review of the notions of
distance used here.  The following distances can be defined much more
generally, but for our purposes, it suffices to define them for
integer-valued random variables $X$ and $Y$.
\begin{enumerate}
\item The \dfn{total variation distance} from $X$ to $Y$ is defined by
  \begin{equation*}%\label{E:tv}
    d_{TV}(X,Y):=\sup_{A\subseteq\Z}\abs{\Prob[X\in A]-\Prob[Y\in A]}.
  \end{equation*}
%  Equivalently, 
%\[d_{TV}(X,Y)=\frac{1}{2}\sum_{k\in\Z}\abs{\Prob[X=k]-\Prob[Y=k]}.\] 
\item The \dfn{$L^1$-Wasserstein distance} is defined by
  \begin{equation*}%\label{E:W_p}
    W_1(X,Y) :=
    \inf_{(Z_1,Z_2)}\E\abs{Z_1-Z_2},
  \end{equation*}
  where the infimum is over random vectors $(Z_1,Z_2)$ such that $Z_1$
  has the same distribution as $X$ and $Z_2$ has the same distribution
  as $Y$ (such a random vector is called a coupling of $X$ and $Y$).

  The Kantorovich--Rubenstein Theorem states that $W_1$ can
  equivalently be defined as
  \[
  W_1(X,Y):=\sup_f\abs{\E f(X)-\E f(Y)},
  \]
  where the supremum is over 1-Lipschitz functions $f:\Z\to\R$.  The
  distance $W_1$ is a metric for the topology of weak convergence plus
  convergence of absolute first moments. (See \cite[Section
  6]{Villani} for a thorough discussion and proofs.)
\end{enumerate}

Note that an indicator function of a set $A$ of integers is
1-Lipschitz on $\Z$, and so for $X$ and $Y$ integer-valued,
\begin{equation}
  \label{E:distance-comparison}
  d_{TV}(X,Y)\le W_1(X,Y).
\end{equation}

\section{Determinantal point processes and the proof of Theorem
  \ref{T:TV-bound}}\label{S:dpp}

Let $\Lambda$ be a locally compact Polish space.  A simple point
process on $\Lambda$ is a random integer-valued (positive) Radon
measure $\chi$ on $\Lambda$, such that the measure of any singleton is
at most $1$.  Alternatively, it may be viewed as a locally finite
random set of points in $\Lambda$; if $A \subseteq \Lambda$ then we
write $\mathcal{N}_A = \chi(A)$ for the (random) number of points
lying in $A$.  If $\Lambda$ is equipped with a reference Borel measure
$\mu$, then the $k^{\mathrm{th}}$ \dfn{joint intensity} or
\dfn{correlation function} $\rho_k:\Lambda^k \to [0,\infty)$ of $\chi$
is defined by the equation
\[
\E \left[\prod_{i=1}^k \mathcal{N}_{A_i}\right] = \int_{A_1} \dots \int_{A_k}
\rho_k(x_1, \dots, x_k) \ d\mu(x_1) \dots d\mu(x_k),
\]
whenever $A_1, \dots, A_k \subseteq \Lambda$ are measurable and
pairwise disjoint, assuming that such functions exist.  A simple point
process is called a \dfn{determinantal point process} with kernel
$K: \Lambda^2 \to \C$ if its joint intensities exist and
\[
\rho_k(x_1, \dots, x_k) = \det \left[K(x_i,x_j)\right]_{i,j=1}^k.
\]
Note that it is immediate from the definition that the restriction of
a determinantal point process on $\Lambda$ to a measurable subset
$D\subseteq\Lambda$ is again a determinantal point process.

A kernel $K:\Lambda^2\to\C$ defines an integral operator on $L^2(\mu)$
by 
\begin{equation}
  \label{E:integral-operator}
  \mathcal{K}(f)(x):=\int_\Lambda K(x,y)f(y)d\mu(y);
\end{equation}
if $K(x,y)=\overline{K(y,x)}$, then the operator $\mathcal{K}$ is
self-adjoint.  It was proved by Macchi \cite{Macchi} and Soshnikov
\cite{Soshnikov} that a kernel $K$ which defines a self-adjoint, trace
class operator $\mathcal{K}$ as above is the kernel of a determinantal
point process if and only if all of the eigenvalues of $\mathcal{K}$
lie in $[0,1]$.

For the remainder of this paper, $\chi$ will denote the point process
of eigenvalue angles in $[-\pi,\pi)$ of an $n\times n$ CUE random
matrix.  For fixed $m \ge 1$, let $\chi^{(m)}$ denote the point
process obtained by multiplying by $m$ those eigenvalue angles of an
$nm\times nm$ CUE random matrix which lie in
$\bigl[-\frac{\pi}{m}, \frac{\pi}{m}\bigr)$.

It is a fact originally due to Dyson that $\chi$ is a
determinantal point process on $[-\pi,\pi)$; it follows easily that
$\chi^{(m)}$ is as well.  The following Proposition gives explicit
formulae for the corresponding kernels.

\begin{prop}
  \label{T:kernels}
  The point process $\chi^{(m)}$ on $[0,2\pi)$ is determinantal with
  kernel
  \[
  K_n^{(m)}(x,y) = \frac{1}{2\pi}
  \frac{\sin\left(\frac{n(x-y)}{2}\right)}{m\sin\left(\frac{(x-y)}{2m}\right)}.
  \]
  with respect to Lebesgue measure.
\end{prop}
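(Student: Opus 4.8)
The plan is to obtain $\chi^{(m)}$ from the full eigenangle process of an $nm\times nm$ CUE matrix by applying two operations that preserve the determinantal structure and transform the kernel in a controlled way: restriction to a sub-arc, and a dilation of the circle. First I would invoke the classical fact (the computation of Dyson referred to above) that the eigenangle process of an $N\times N$ CUE matrix is determinantal on $[-\pi,\pi)$ with respect to Lebesgue measure, with projection kernel
\[
K_N(x,y)=\frac{1}{2\pi}\sum_{k=0}^{N-1}e^{ik(x-y)}
=\frac{e^{i(N-1)(x-y)/2}}{2\pi}\cdot\frac{\sin\!\left(\frac{N(x-y)}{2}\right)}{\sin\!\left(\frac{x-y}{2}\right)};
\]
since $\det[K(x_i,x_j)]$ is unchanged when $K(x,y)$ is replaced by $\frac{g(x)}{g(y)}K(x,y)$ for a unimodular function $g$, I would immediately pass to the real symmetric kernel $\widetilde K_N(x,y)=\frac{1}{2\pi}\,\frac{\sin(N(x-y)/2)}{\sin((x-y)/2)}$ (take $g(x)=e^{i(N-1)x/2}$).

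Next, with $N=nm$: since the restriction of a determinantal point process to a measurable subset is again determinantal with the correspondingly restricted kernel, the eigenangles of an $nm\times nm$ CUE matrix lying in $I_m:=[-\pi/m,\pi/m)$ form a determinantal point process on $I_m$ with kernel $\widetilde K_{nm}$ restricted to $I_m\times I_m$. I would then apply the dilation $\varphi\colon I_m\to[-\pi,\pi)$, $\varphi(x)=mx$. A change of variables in the defining relation for the joint intensities shows that, under a bijection $\varphi$, a determinantal point process with kernel $K$ with respect to a reference measure $\mu$ pushes forward to a determinantal point process with kernel $(u,v)\mapsto K(\varphi^{-1}(u),\varphi^{-1}(v))$ with respect to $\varphi_*\mu$, while changing the reference measure from $\mu$ to $\nu$, with $d\mu=h\,d\nu$, replaces the kernel $K(u,v)$ by $\sqrt{h(u)}\,K(u,v)\,\sqrt{h(v)}$. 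In the case at hand $\mu$ is Lebesgue measure on $I_m$ and $\varphi_*\mu=\frac1m\,\nu$ with $\nu$ Lebesgue measure on $[-\pi,\pi)$, so $h\equiv\frac1m$, and with $\varphi^{-1}(u)=u/m$ a one-line computation yields
\[
K_n^{(m)}(u,v)=\frac1m\,\widetilde K_{nm}\!\left(\frac um,\frac vm\right)
=\frac{1}{2\pi m}\cdot\frac{\sin\!\left(\tfrac{nm}{2}\cdot\tfrac{u-v}{m}\right)}{\sin\!\left(\tfrac12\cdot\tfrac{u-v}{m}\right)}
=\frac{1}{2\pi}\cdot\frac{\sin\!\left(\frac{n(u-v)}{2}\right)}{m\sin\!\left(\frac{u-v}{2m}\right)},
\]
as claimed. (I have worked on $[-\pi,\pi)$; since $K_n^{(m)}$ depends only on $u-v$, the same formula holds with $[0,2\pi)$ as in the statement, and there is no need to verify separately that $K_n^{(m)}$ is a legitimate determinantal kernel, as it was produced from one.)

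The only step that needs genuine care is the dilation: rescaling the circle by a factor of $m$ rescales the reference measure by the same factor, which conjugates the kernel by $1/\sqrt m$ on each side --- a net factor of $1/m$. This is exactly the factor that makes the one-point intensity come out right, $K_n^{(m)}(u,u)=\lim_{t\to0}\frac{1}{2\pi}\frac{\sin(nt/2)}{m\sin(t/(2m))}=\frac{n}{2\pi}$, matching $\E\,\mathcal N_A^{(m)}=\frac{n\abs{A}}{2\pi}$; omitting it would instead give the intensity $\frac{nm}{2\pi}$ of the un-rescaled $nm\times nm$ eigenangle process. Everything else --- Dyson's formula, the behavior of determinantal processes under restriction, and the change-of-variables bookkeeping --- is routine.
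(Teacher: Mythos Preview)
Your argument is correct and follows exactly the paper's approach: cite Dyson's formula for the $m=1$ case, then obtain the general case by the change of variables $K_n^{(m)}(x,y)=\frac{1}{m}K_{nm}^{(1)}\bigl(\frac{x}{m},\frac{y}{m}\bigr)$. You have simply written out in full the restriction-plus-dilation bookkeeping (including the gauge freedom that removes the unimodular phase) that the paper compresses into one sentence.
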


\begin{proof}
  The case $m=1$ was proved by Dyson in \cite{Dyson} (although that work
  predates the language of determinantal point processes); see also
  \cite[Section 11.1]{Mehta} or \cite[Section 5.4]{KaSa}.
  The general case follows from a change of variables which shows that
  $K_n^{(m)}(x,y) = \frac{1}{m}K_{mn}^{(1)}\bigl(\frac{x}{m}, \frac{y}{m}\bigr)$.
\end{proof}

Note in particular that the corresponding operators
$\mathcal{K}_n^{(m)}$ as defined in \eqref{E:integral-operator} are
self-adjoint and trace class.

\medskip

The following general result on determinantal point processes is the
main technical ingredient behind Theorem \ref{T:TV-bound}.

\begin{prop}
  \label{T:general-bound}
  Let $\mathcal{N}$ and $\widetilde{\mathcal{N}}$ be the total numbers
  of points in two determinantal point processes on $(\Lambda, \mu)$
  with conjugate-symmetric kernels $K, \widetilde{K} \in
  L^2(\mu\otimes \mu)$, respectively. Suppose that
  $\mathcal{N}, \widetilde{\mathcal{N}} \le N$ almost surely.  Then
  \[
  d_{TV}(\mathcal{N}, \widetilde{\mathcal{N}}) 
  \le W_1(\mathcal{N}, \widetilde{\mathcal{N}})
  \le \sqrt{N \int \int \abs{K(x,y) - \widetilde{K}(x,y)}^2 \
    d\mu(x) d\mu(y)}.
  \]
\end{prop}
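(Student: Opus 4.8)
The plan is to bound $W_1(\mathcal{N}, \widetilde{\mathcal{N}})$ by exhibiting an explicit coupling, using the fact that a determinantal point process with a self-adjoint trace-class kernel whose spectrum lies in $[0,1]$ can be realized as a mixture of determinantal \emph{projection} processes via the spectral decomposition; more precisely, the total number of points $\mathcal{N}$ has the distribution of a sum of independent Bernoulli random variables with parameters equal to the eigenvalues $\lambda_1, \lambda_2, \dots$ of $\mathcal{K}$ (this is the classical result of Hough--Krishnapur--Peres--Vir\'ag, though one could also cite Soshnikov). The point is that $\mathcal{N} \stackrel{d}{=} \sum_k B(\lambda_k)$ and $\widetilde{\mathcal{N}} \stackrel{d}{=} \sum_k B(\tilde\lambda_k)$, where $\lambda_k, \tilde\lambda_k$ are the eigenvalues of $\mathcal{K}, \widetilde{\mathcal{K}}$ respectively, listed in some order. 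Coupling the $k$-th Bernoulli variables optimally gives
\[
W_1(\mathcal{N}, \widetilde{\mathcal{N}}) \le \sum_k W_1\bigl(B(\lambda_k), B(\tilde\lambda_k)\bigr) = \sum_k \abs{\lambda_k - \tilde\lambda_k}.
\]
The first inequality $d_{TV} \le W_1$ is just \eqref{E:distance-comparison}, so the whole statement reduces to controlling $\sum_k \abs{\lambda_k - \tilde\lambda_k}$.

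The next step is to relate $\sum_k \abs{\lambda_k - \tilde\lambda_k}$ to the Hilbert--Schmidt norm of $\mathcal{K} - \widetilde{\mathcal{K}}$. By the Lidskii-type inequality for Hermitian operators (or a Weyl-type majorization argument), if the eigenvalues of $\mathcal{K}$ and $\widetilde{\mathcal{K}}$ are listed \emph{in decreasing order}, then $\sum_k \abs{\lambda_k - \tilde\lambda_k}^2 \le \norm{\mathcal{K} - \widetilde{\mathcal{K}}}_{HS}^2 = \int\int \abs{K(x,y) - \widetilde{K}(x,y)}^2\, d\mu(x)\,d\mu(y)$; this is the statement that the map from a Hermitian operator to its decreasing eigenvalue sequence is $1$-Lipschitz from Hilbert--Schmidt norm to $\ell^2$. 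So using this particular ordering of the eigenvalues in the coupling above, and then Cauchy--Schwarz to pass from $\ell^1$ to $\ell^2$ over the (at most $N$) nonzero eigenvalues, gives
\[
W_1(\mathcal{N}, \widetilde{\mathcal{N}}) \le \sum_k \abs{\lambda_k - \tilde\lambda_k} \le \sqrt{N}\left(\sum_k \abs{\lambda_k - \tilde\lambda_k}^2\right)^{1/2} \le \sqrt{N\,\norm{\mathcal{K}-\widetilde{\mathcal{K}}}_{HS}^2},
\]
which is exactly the claimed bound. The almost-sure bound $\mathcal{N}, \widetilde{\mathcal{N}} \le N$ is what guarantees that at most $N$ of the $\lambda_k$ (and of the $\tilde\lambda_k$) are nonzero, so that the Cauchy--Schwarz step only loses a factor $\sqrt{N}$ rather than $\sqrt{\text{rank}}$ with an uncontrolled rank.

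The main obstacle — really the only delicate point — is justifying the Bernoulli-sum representation and, in tandem, checking that the decreasing-order eigenvalue lists may be used simultaneously for both the coupling and the Lidskii bound. The coupling bound $W_1(\sum B(p_k), \sum B(q_k)) \le \sum\abs{p_k - q_k}$ holds for \emph{any} fixed pairing of the eigenvalue sequences (couple each pair of Bernoullis monotonically and add), so we are free to choose the pairing that comes from listing both spectra in decreasing order; then the Lidskii/Hoffman--Wielandt inequality for that ordering controls the $\ell^2$ norm of the difference. One should also confirm that $K, \widetilde K \in L^2(\mu \otimes \mu)$ indeed makes $\mathcal{K}, \widetilde{\mathcal{K}}$ Hilbert--Schmidt (hence compact, hence with a genuine eigenvalue sequence), and that being kernels of determinantal point processes forces the spectra into $[0,1]$ by the Macchi--Soshnikov criterion already recalled in the text; together with the trace-class property (implied since only finitely many eigenvalues are nonzero under the $\le N$ hypothesis) this makes the Bernoulli representation rigorous. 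Everything else is a short assembly of these ingredients.
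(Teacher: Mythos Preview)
Your proof is correct and is essentially the same as the paper's: both use the Hough--Krishnapur--Peres--Vir\'ag Bernoulli representation, couple the Bernoullis monotonically (the paper does this via a common uniform $Y_j$), apply Cauchy--Schwarz over the at most $N$ nonzero eigenvalues, and then invoke the Hoffman--Wielandt inequality together with the identification of the Hilbert--Schmidt norm with the $L^2(\mu\otimes\mu)$ norm of the kernel. The only cosmetic difference is that the paper writes the coupling explicitly as $\xi_j = \ind{Y_j \le \lambda_j}$, whereas you phrase it abstractly as the optimal Bernoulli coupling.
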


Proposition \ref{T:general-bound} depends on the following remarkable
property of determinantal point processes.

\begin{lemma}[{\cite[Theorem 7]{HKPV06}}]
  \label{T:HKPV}
  Consider a determinantal
  point process with kernel $K$, whose corresponding  integral
  operator $\mathcal{K}$ is self-adjoint and trace class, with
  eigenvalues $\{ \lambda_j \}$.  Let $\mathcal{N}$ be the total
  number of points in the process.  
  Then 
\[\mathcal{N} \stackrel{d}{=} \sum_j \xi_j,\]
 where $\{ \xi_j \}$
  are independent Bernoulli random variables with $\Prob[\xi_j = 1] =
  \lambda_j$ and $\Prob[\xi_j = 0] = 1-\lambda_j$.
\end{lemma}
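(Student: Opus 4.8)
The plan is to compute the probability generating function $z\mapsto\E\bigl[z^{\mathcal{N}}\bigr]$ directly from the determinantal structure and recognize it as the generating function of a sum of independent Bernoullis. Let $K(x,y)=\sum_j\lambda_j\varphi_j(x)\overline{\varphi_j(y)}$ be the spectral expansion of $K$, where $\{\varphi_j\}$ is an orthonormal family of eigenfunctions of the self-adjoint, trace class operator $\mathcal{K}$; by the Macchi--Soshnikov characterization (applicable since the determinantal process exists) each $\lambda_j\in[0,1]$, and since $\mathcal{K}$ is trace class, $\sum_j\lambda_j=\tr\mathcal{K}<\infty$. The latter guarantees (via Borel--Cantelli) that $\sum_j\xi_j$ is almost surely finite, with generating function $\E\bigl[z^{\sum_j\xi_j}\bigr]=\prod_j(1-\lambda_j+\lambda_j z)$, the infinite product converging for every $z$. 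So it suffices to prove $\E\bigl[z^{\mathcal{N}}\bigr]=\prod_j(1-\lambda_j+\lambda_j z)$ for all $z$; since $\mathcal{N}$ is $\Z_{\ge0}$-valued, equality of generating functions forces equality in law. (In the setting where Lemma~\ref{T:HKPV} is used, $\mathcal{N}$ is bounded, so both sides are in fact polynomials and one need only match coefficients.)

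First I would record the identity
\[
  \int_{\Lambda^k}\det\bigl[K(x_i,x_j)\bigr]_{i,j=1}^k\,d\mu(x_1)\cdots d\mu(x_k)=k!\,e_k\bigl(\{\lambda_j\}\bigr),
\]
where $e_k$ denotes the $k$-th elementary symmetric function; this follows from a direct computation: expand the determinant over permutations, substitute the spectral expansion of $K$, and collapse the resulting integrals using the orthonormality relations $\int\varphi_j\overline{\varphi_\ell}\,d\mu=\delta_{j\ell}$, so that the contribution of a permutation organizes itself according to its cycle type, and summing over cycle types produces $k!\,e_k$. In particular $\E\bigl[\binom{\mathcal{N}}{k}\bigr]=\tfrac1{k!}\E\bigl[\mathcal{N}(\mathcal{N}-1)\cdots(\mathcal{N}-k+1)\bigr]=\tfrac1{k!}\int_{\Lambda^k}\rho_k\,d\mu^{\otimes k}=e_k\bigl(\{\lambda_j\}\bigr)$, using the standard identification of the $k$-th factorial moment of a simple point process with the integral of its $k$-th correlation function over $\Lambda^k$ (which follows from the defining property of $\rho_k$ by approximating $\Lambda$ by disjoint pieces and passing to the limit). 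Note this also shows all factorial moments of $\mathcal{N}$ are finite, since $k!\,e_k\le(\tr\mathcal{K})^k$.

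Then I would expand $z^{\mathcal{N}}=\bigl(1+(z-1)\bigr)^{\mathcal{N}}=\sum_{k\ge0}\binom{\mathcal{N}}{k}(z-1)^k$ and take expectations term by term:
\[
  \E\bigl[z^{\mathcal{N}}\bigr]=\sum_{k\ge0}\E\Bigl[\tbinom{\mathcal{N}}{k}\Bigr](z-1)^k=\sum_{k\ge0}e_k\bigl(\{\lambda_j\}\bigr)(z-1)^k=\prod_j\bigl(1+(z-1)\lambda_j\bigr),
\]
which is exactly $\prod_j(1-\lambda_j+\lambda_j z)$, completing the proof. The one point requiring care — and where the trace class hypothesis does the essential work — is the interchange of expectation and the infinite sum: since $\sum_{k\ge0}\binom{\mathcal{N}}{k}|z-1|^k=(1+|z-1|)^{\mathcal{N}}$ and $\E\bigl[(1+|z-1|)^{\mathcal{N}}\bigr]=\prod_j(1+|z-1|\lambda_j)<\infty$ (again by $\sum_j\lambda_j<\infty$), dominated convergence applies, and the same bound shows that $\sum_k e_k|z-1|^k=\prod_j(1+|z-1|\lambda_j)$ converges absolutely. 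The main obstacle is thus not any single hard estimate but correctly marshaling these standard facts about determinantal processes; alternatively, one may simply invoke the cited result of Hough, Krishnapur, Peres, and Virág, whose proof instead realizes the process as a mixture of determinantal projection processes, for which the point count is deterministic and equal to the rank of the projection.
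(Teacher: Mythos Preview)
The paper does not actually prove this lemma; it is simply quoted from \cite{HKPV06} and used as a black box in the proof of Proposition~\ref{T:general-bound}. So there is no ``paper's own proof'' to compare against.

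Your argument is correct and self-contained. The key identity $\int_{\Lambda^k}\det[K(x_i,x_j)]\,d\mu^{\otimes k}=k!\,e_k(\{\lambda_j\})$ is most cleanly obtained via Cauchy--Binet: writing $[K(x_i,x_j)]=\Phi^*\Lambda\Phi$ with $\Phi_{\ell,i}=\varphi_\ell(x_i)$ gives $\det[K(x_i,x_j)]=\sum_{|S|=k}\bigl(\prod_{\ell\in S}\lambda_\ell\bigr)\abs{\det\Phi_S}^2$, and orthonormality yields $\int\abs{\det\Phi_S}^2\,d\mu^{\otimes k}=k!$ for each $k$-subset $S$. Your justification of the interchange of sum and expectation is slightly circular as written (it invokes the very identity being proved, at the point $1+|z-1|$), but this is easily repaired: for $s\ge 0$ the partial sums $\sum_{k\le M}\binom{\mathcal{N}}{k}s^k$ increase to $(1+s)^{\mathcal{N}}$, so monotone convergence gives $\E[(1+s)^{\mathcal{N}}]=\sum_k e_k s^k=\prod_j(1+s\lambda_j)<\infty$ first, and then dominated convergence handles general $z$.

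As you yourself note, the proof in \cite{HKPV06} takes a different route: it shows that a determinantal process with Hermitian locally trace class kernel is a mixture over independent Bernoulli$(\lambda_j)$ coin flips of determinantal \emph{projection} processes (with kernel $\sum_{j:\xi_j=1}\varphi_j\otimes\overline{\varphi_j}$), and a projection process has a deterministic number of points equal to the rank of the projection. That approach yields the stronger structural statement; your generating-function argument is more direct if one only wants the distribution of $\mathcal{N}$.
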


\begin{proof}[Proof of Proposition \ref{T:general-bound}] 
  By \eqref{E:distance-comparison}, it suffices to prove the second
  inequality.
  
  Let $\{\lambda_j\}$ and $\{ \widetilde{\lambda}_j \}$ be the
  eigenvalues, listed in nonincreasing order, of the integral
  operators $\mathcal{K}$ and $\widetilde{\mathcal{K}}$ with kernels
  $K$ and $\widetilde{K}$ respectively. Since
  $\mathcal{N}, \widetilde{\mathcal{N}} \le N$, by Lemma \ref{T:HKPV},
  $\lambda_j = \widetilde{\lambda}_j = 0$ for $j > N$. Let
  $\{Y_j\}_{j=1}^N$ be independent random variables uniformly
  distributed in $[0,1]$.  For each $j$, define
  \[
  \xi_j = \ind{Y_j \le \lambda_j}
  \qquad \text{and} \qquad
  \widetilde{\xi}_j = \ind{Y_j \le \widetilde{\lambda}_j}.
  \]
  Through Lemma \ref{T:HKPV}, this gives a coupling of
  $\mathcal{N}$ and $\widetilde{\mathcal{N}}$, and so
  \begin{equation}\label{E:W1-bound}
  W_1(\mathcal{N}, \widetilde{\mathcal{N}})
  \le \E \abs{ \sum_{j=1}^N \xi_j -
    \sum_{j=1}^N \widetilde{\xi}_j }
  \le \sum_{j=1}^N \E \abs{ \xi_j - \widetilde{\xi}_j }
  = \sum_{j=1}^N \abs{\lambda_j - \widetilde{\lambda}_j}
  \le \sqrt{ N \sum_{j=1}^N \abs{\lambda_j -
          \widetilde{\lambda}_j}^2}.
  \end{equation}
  By the Hoffmann--Wielandt inequality \cite[Theorem II.6.11]{Kato95},
  \[
  \sqrt{\sum_{j=1}^N \abs{\lambda_j - \widetilde{\lambda}_j}^2}\le
  \norm{\mathcal{K}-\widetilde{\mathcal{K}}}_{H.S.},
  \]
  where $\lVert\cdot\rVert_{H.S.}$ denotes the Hilbert--Schmidt norm.
  The result now follows from the general fact that the
  Hilbert--Schmidt norm of an integral operator on $L^2(\mu)$ is given
  by the $L^2(\mu\otimes \mu)$ norm of its kernel (see e.g.\ \cite[p.\
  245]{Wojtaszczyk}).
\end{proof}

We are now in a position to prove the main theorem.
\begin{proof}[Proof of Theorem \ref{T:TV-bound}]
  For every $0 \le \varphi \le \frac{\pi}{2}$,
  \begin{equation}
    \label{E:Taylor}
  \varphi - \frac{1}{6}\varphi^3
  \le \sin \varphi \le m \sin \left(\frac{\varphi}{m} \right)
  \le \varphi,
  \end{equation}
  and so
  \begin{equation*}
      0 \le \frac{1}{\sin \varphi} - \frac{1}{m \sin
          \left(\frac{\varphi}{m}\right)} 
        \le \frac{1}{\varphi - \frac{1}{6} \varphi^3}
        - \frac{1}{\varphi} \\
        = \frac{\varphi}{6 - \varphi^2} \le \frac{\varphi}{3}.
  \end{equation*}
  Thus by Propositions \ref{T:kernels} and \ref{T:general-bound},
  \begin{equation*}
    \begin{split}
      W_1(\mathcal{N}_A, \mathcal{N}^{(m)}_A) & \le
      \sqrt{\frac{mn}{(2\pi)^2} \int_A \int_A \sin^2
        \left(\frac{n(x-y)}{2}\right) \left( \frac{1}{\sin\left(
              \frac{x-y}{2}\right)} - \frac{1}{m \sin
            \left(\frac{x-y}{2m} \right)}\right)^2 \ dx \ dy} \\
      & \le \frac{1}{6\pi} \sqrt{mn \int_A \int_A (x-y)^2 \ dx \ dy} \\
      & \le \frac{1}{6\pi} \sqrt{mn} \abs{A} \diam A.
      \qedhere
    \end{split}
  \end{equation*}
\end{proof}

The refinement \eqref{E:TV-bound-improved} of Theorem \ref{T:TV-bound}
follows by using a higher-order Taylor expansion in \eqref{E:Taylor}.

Both $\mathcal{N}_A$ and $\mathcal{N}_A^{(m)}$ satisfy central limit
theorems in the mesoscopic regime (see Proposition \ref{T:counting-function-clt} and the remark which follows).
We show in the next section that Theorem \ref{T:TV-bound} does indeed describe a
non-trivial self-similarity phenomenon on a mesoscopic level, which is
not the result of both processes having the same limit.  

In the microscopic regime, one can say more.  As was first observed in
\cite{Dyson}, and more clearly spelled out in \cite{Mehta}, the kernel
$K_n^{(1)}=K_n$ has the following microscopic scaling limit:
\begin{equation}
  \label{E:sine-kernel}
  \lim_{n\to \infty} \frac{2\pi}{n} K_n \left(\frac{2\pi x}{n},
    \frac{2\pi y}{n}\right) = \frac{\sin \bigl(\pi(x-y)\bigr)}{\pi
    (x-y)}.
\end{equation}
The same microscopic scaling limit appears for bulk eigenvalues of
certain Hermitian random matrices as well; see \cite{AGZ,CoLe,Mehta}.
There is sufficient uniformity in the convergence in
\eqref{E:sine-kernel} to imply that the point process $\chi$, rescaled
to lie in $[-n/2,n/2)$, converges as $n \to \infty$ to an unbounded
point process on $\R$ which is determinantal, with the right hand side
of \eqref{E:sine-kernel} as its kernel with respect to Lebesgue
measure.  This process is called the sine kernel process; we denote by
$\mathcal{S}_A$ the number of points of the sine kernel process which
lie in $A \subseteq \R$.  In particular, by e.g.\ \cite[Lemma
4.2.48]{AGZ}, $\mathcal{N}_{n,\frac{2\pi}{n}A} \Rightarrow
\mathcal{S}_A$.  A limited version of Theorem \ref{T:TV-bound} can be
deduced from the convergence to the sine kernel process.  On the other
hand, Theorem \ref{T:TV-bound} actually improves on the classical
microscopic result by estimating a rate of convergence, as follows.

\begin{cor}
  \label{T:sine-kernel-convergence} 
  Let $A \subseteq \R$, and let $\mathcal{S}_A$ denote the number of
  points of the sine kernel process which lie in $A$.  Then
  \[
  d_{TV}\bigl(\mathcal{N}_{\frac{2\pi}{n} A} , \mathcal{S}_A \bigr) \le W_1
  \bigl(\mathcal{N}_{\frac{2\pi}{n} A} , \mathcal{S}_A )\bigr) \le
  \frac{5 \abs{A} \diam A}{n^{3/2}}
  \]
  for all sufficiently large $n$.
\end{cor}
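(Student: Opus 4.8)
The plan is to rerun the coupling argument behind Proposition \ref{T:general-bound}, this time exploiting the fact that one of the two operators involved has finite rank. We may assume $A$ is bounded, since otherwise the claimed upper bound is infinite, and we take $n$ large enough that $\frac{2\pi}{n}A$ lies in an arc of length less than $2\pi$. Rescaling the eigenangle process of an $n \times n$ CUE matrix by the factor $\frac{n}{2\pi}$ identifies $\mathcal{N}_{\frac{2\pi}{n}A}$ with the number of points in $A$ of a determinantal point process on $A$ whose kernel, by Proposition \ref{T:kernels} and the standard scaling of determinantal kernels under a change of variables, is
\[
\widehat{K}_n(x,y) = \frac{2\pi}{n} K_n\Bigl(\frac{2\pi x}{n}, \frac{2\pi y}{n}\Bigr) = \frac{1}{n} \cdot \frac{\sin\bigl(\pi(x-y)\bigr)}{\sin\bigl(\pi(x-y)/n\bigr)},
\]
while $\mathcal{S}_A$ is the number of points in $A$ of the determinantal point process with the sine kernel $S(x,y) = \frac{\sin(\pi(x-y))}{\pi(x-y)}$. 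The corresponding operators $\widehat{\mathcal{K}}_n$ and $\mathcal{S}$ on $L^2(A)$ are positive, self-adjoint, and trace class, with $\tr \mathcal{S} = \abs{A}$ and $\widehat{\mathcal{K}}_n$ of rank at most $n$; write their eigenvalues in nonincreasing order as $\lambda_1 \ge \lambda_2 \ge \cdots$ (so that $\lambda_j = 0$ for $j > n$) and $\widetilde{\lambda}_1 \ge \widetilde{\lambda}_2 \ge \cdots$.

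Following the proof of Proposition \ref{T:general-bound}, I would take independent uniform random variables $Y_1, Y_2, \dots$ on $[0,1]$ and set $\xi_j = \ind{Y_j \le \lambda_j}$ and $\widetilde{\xi}_j = \ind{Y_j \le \widetilde{\lambda}_j}$. By Lemma \ref{T:HKPV} (together with the Borel--Cantelli lemma, which applies since $\sum_j \widetilde{\lambda}_j = \abs{A} < \infty$) this is a coupling of $\mathcal{N}_{\frac{2\pi}{n}A}$ and $\mathcal{S}_A$, so that
\[
W_1\bigl(\mathcal{N}_{\frac{2\pi}{n}A}, \mathcal{S}_A\bigr) \le \sum_{j=1}^{\infty} \E\abs{\xi_j - \widetilde{\xi}_j} = \sum_{j=1}^{n} \abs{\lambda_j - \widetilde{\lambda}_j} + \sum_{j > n} \widetilde{\lambda}_j.
\]
By Cauchy--Schwarz and the Hoffmann--Wielandt inequality (in its Hilbert--Schmidt form) the first sum is at most $\sqrt{n}\,\bigl(\sum_j \abs{\lambda_j - \widetilde{\lambda}_j}^2\bigr)^{1/2} \le \sqrt{n}\, \norm{\widehat{\mathcal{K}}_n - \mathcal{S}}_{H.S.} = \sqrt{n}\, \norm{\widehat{K}_n - S}_{L^2(A \times A)}$. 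To estimate this $L^2$ norm I would apply the same Taylor bound \eqref{E:Taylor} that powers the proof of Theorem \ref{T:TV-bound}, now with $\varphi$ and $m$ replaced by $\pi\abs{x-y}$ and $n$; this yields, for $n$ large relative to $\diam A$, the pointwise bound $\abs{\widehat{K}_n(x,y) - S(x,y)} \le \frac{\pi\abs{x-y}}{6n^2 - \pi^2(x-y)^2} \le \frac{\abs{x-y}}{n^2}$, hence $\norm{\widehat{K}_n - S}_{L^2(A \times A)} \le \frac{\abs{A}\diam A}{n^2}$ and the first sum is at most $\frac{\abs{A}\diam A}{n^{3/2}}$.

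To control the tail $\sum_{j > n} \widetilde{\lambda}_j$, I would invoke the classical fact that the eigenvalues of the sine (sinc) kernel operator on a bounded set decay super-exponentially: since $A$ is contained in an interval $I$ of length $\diam A$, the min-max principle gives $\widetilde{\lambda}_j \le \widetilde{\lambda}_j(I)$, and the classical (Slepian, Landau--Widom) asymptotics for the prolate spheroidal eigenvalues $\widetilde{\lambda}_j(I)$ show that these decay faster than any exponential once $j$ exceeds a fixed multiple of $\diam A$. Hence, for fixed $A$ and $n$ large, $\sum_{j > n} \widetilde{\lambda}_j$ is smaller than any fixed power of $1/n$, and in particular is at most $\frac{4\abs{A}\diam A}{n^{3/2}}$ once $n$ is sufficiently large. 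Adding the two estimates gives $W_1 \le \frac{5\abs{A}\diam A}{n^{3/2}}$, and $d_{TV} \le W_1$ by \eqref{E:distance-comparison}.

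The one genuinely non-routine step is this last one. Proposition \ref{T:general-bound} cannot be cited as a black box here, because $\mathcal{S}_A$ is not almost surely bounded; the coupling has to be rerun with the finite-rank operator on one side, and one must then show that the extra term $\sum_{j > n} \widetilde{\lambda}_j$ contributed by the infinite-rank operator is negligible compared with $n^{-3/2}$, which is where the fast decay of the sine-kernel eigenvalues enters.
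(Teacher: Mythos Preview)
Your argument is correct, but it takes a genuinely different route from the paper's.  The paper never compares $\widehat{K}_n$ with the sine kernel directly; instead it telescopes via Theorem \ref{T:TV-bound}: noting that $\mathcal{N}^{(2)}_{2^{k}n,\frac{2\pi}{2^{k}n}A}=\mathcal{N}_{2^{k+1}n,\frac{2\pi}{2^{k+1}n}A}$, it applies Theorem \ref{T:TV-bound} with $m=2$ at each dyadic scale $2^{k}n$, sums the resulting geometric series in $k$, and then uses only the \emph{qualitative} convergence $\mathcal{N}_{N,\frac{2\pi}{N}A}\Rightarrow\mathcal{S}_A$ (together with equality of means) to kill the remainder as $M\to\infty$.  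Thus the paper needs nothing about the spectrum of the sine-kernel operator itself.

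Your direct approach has its own advantages: once one is willing to invoke the super-exponential decay of the prolate spheroidal eigenvalues, the tail $\sum_{j>n}\widetilde{\lambda}_j$ is $o(n^{-K})$ for every $K$, so for large $n$ your bound is really $(1+o(1))\tfrac{\abs{A}\diam A}{n^{3/2}}$ (and in fact $(1+o(1))\tfrac{\pi}{6}\tfrac{\abs{A}\diam A}{n^{3/2}}$ if one keeps the sharper pointwise estimate), which is better than the constant the telescoping sum produces.  The cost is that you rely on an external nontrivial spectral fact, whereas the paper's dyadic argument is entirely self-contained once Theorem \ref{T:TV-bound} is in hand.  Your observation that Proposition \ref{T:general-bound} cannot be used as a black box because $\mathcal{S}_A$ is unbounded, and that one has to rerun the coupling with the finite-rank operator on one side, is exactly the right diagnosis of where the subtlety lies in the direct approach.
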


\begin{proof}
  Let $n$ be large enough that $A \subseteq \left[-\frac{n}{2},
    \frac{n}{2}\right)$ and $\diam A \le \frac{n}{2}$.  Let $k\ge 0$.
  Recall that by definition of $\chi^{(2)}$,
  \[
  \mathcal{N}^{(2)}_{2^k n, \frac{2\pi}{2^k n}A} = 
  \mathcal{N}_{2^{k+1} n, \frac{2\pi}{2^{k+1} n}A}.
  \]
  It thus follows from Theorem \ref{T:TV-bound} (with $m=2$ and $2^kn$
  in place of $n$) that
  \[
  W_1\left(\mathcal{N}_{2^k n, \frac{2\pi}{2^k n}A},
    \mathcal{N}_{2^{k+1} n, \frac{2\pi}{2^{k+1} n}A} \right) \le
  \frac{\sqrt{2^{k+1} n}}{6\pi} \frac{4\pi^2 \abs{A} \diam A }{(2^k
    n)^2}=\frac{2\sqrt{2} \pi \abs{A} \diam A}{3 (2^kn)^{3/2}}.
  \]
  Fixing $M\in\N$ and applying this estimate for each
  $k\in\{0,\ldots,M-1\}$ then gives that
  \begin{equation}
    \label{E:W1-sine-bound}
    \begin{split}
      W_1 \left(\mathcal{N}_{\frac{2\pi}{n}A}, \mathcal{S}_A \right) &
      \le \sum_{k=0}^{M-1} W_1 \left(\mathcal{N}_{2^k n, \frac{2\pi}{2^k
            n}A}, \mathcal{N}_{2^{k+1} n, \frac{2\pi}{2^{k+1} n}A}
      \right) + W_1 \left(\mathcal{N}_{2^{M} n, \frac{2\pi}{2^{M} n}A},
      \mathcal{S}_A \right) \\
      & \le \frac{2\sqrt{2} \pi \abs{A} \diam A}{3 n^{3/2}} \sum_{k=0}^{M-1}
      \frac{1}{2^{3k/2}}
      + W_1 \left(\mathcal{N}_{2^{M} n, \frac{2\pi}{2^{M} n}A},
      \mathcal{S}_A \right).
    \end{split}
  \end{equation}
  As was discussed above, it is well known that $\mathcal{N}_{2^M n,
    \frac{2\pi}{2^M n} A} \Rightarrow \mathcal{S}_A$ as $M \to
  \infty$.  Since all of the $\mathcal{N}_{2^M n, \frac{2\pi}{2^M n}
    A} $ and $\mathcal{S}_A$ are nonnegative random variables with
  means equal to $\abs{A}$, weak convergence is equivalent to $W_1$
  convergence, and so $W_1 \left(\mathcal{N}_{2^{M} n,
      \frac{2\pi}{2^{M} n}A}, \mathcal{S}_A \right) \to 0$ as $M \to
  \infty$.  Thus taking the limit $M \to \infty$ in
  \eqref{E:W1-sine-bound} yields
  \[
  W_1 \left(\mathcal{N}_{\frac{2\pi}{n}A}, \mathcal{S}_A \right)
  \le \frac{2\sqrt{2} \pi \abs{A} \diam A}{3 n^{3/2}} \sum_{k=0}^{\infty}
      \frac{1}{2^{3k/2}}\le\frac{5\abs{A}\diam A}{n^{3/2}}.
      \qedhere
  \]
\end{proof}

\begin{rmk}
  The application of the Cauchy--Schwarz inequality in the last step
  of \eqref{E:W1-bound} in the proof of Proposition
  \ref{T:general-bound} above is the source of the factor of
  $\sqrt{n}$ in the statement of Theorem \ref{T:TV-bound}, which we
  conjecture to be unnecessary.  A direct estimate of the quantity
  \[
  \sum_{j=1}^N\abs{\lambda_j-\widetilde{\lambda}_j},
  \]
  which is bounded by the trace class norm of the difference
  $\mathcal{K}-\widetilde{\mathcal{K}}$, could potentially avoid that
  dimensional factor, thereby increasing the size of the mesoscopic
  regime in which Theorem \ref{T:TV-bound} gives non-trivial
  information.  Unfortunately, trace class norms are considerably more
  difficult to compute than Hilbert--Schmidt norms, and we have not
  found an estimate which improves on the approach taken above.
\end{rmk}

\section{Some further asymptotics}\label{S:variances}

The following lemma gives asymptotics for $\var
\mathcal{N}_\theta^{(m)}$ in various regimes.  As was mentioned in the
introduction, the paper \cite{Ra97} gives precise asymptotics
as $n \to \infty$ for $\var \mathcal{N}_\theta$ when $\theta$ is
fixed, but in the present context,  estimates for when $\theta$ varies with
$n$ are needed.   
\begin{lemma}
  \label{T:variance-log}
  Let $m\in\N$ be fixed.  Whenever $\frac{3 \pi}{2n} \le \theta \le \frac{\pi}{2}$,
  \[
  \var \mathcal{N}_\theta^{(m)} \ge \frac{1}{3\pi^2} \log
  \left(\frac{2 n\theta}{3\pi}\right).
  \]
  Moreover,
  \[
  \var \mathcal{N}_\theta^{(m)} \le
  \begin{cases} \frac{n^2 \theta^2 + 2}{4} & \text{if } 0 < \theta
    \le \frac{1}{n}, \\
    \frac{1}{2} \log \bigl(e^{3/2} n\theta \bigr)
    & \text{if } \frac{1}{n} \le \theta \le \frac{\pi}{2}. \end{cases}
  \]
  Consequently, for a sequence
  $\left\{\theta_n \in \left(0,\frac{\pi}{2}\right]\right\}$,
  \[
  \var \mathcal{N}_{n, \theta_n}^{(m)} \to \infty
  \qquad \text{if and only if} \qquad
  n \theta_n \to \infty.
  \]
\end{lemma}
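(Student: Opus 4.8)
The plan is to derive the asymptotics from the exact determinantal description. By Lemma~\ref{T:HKPV}, $\mathcal{N}_\theta^{(m)}$ has the same distribution as $\sum_j \xi_j$ with $\xi_j$ independent Bernoulli with parameters $\lambda_j$ the eigenvalues of the integral operator $\mathcal{K}_n^{(m)}$ restricted to $[-\theta,\theta]$; hence
\[
\var \mathcal{N}_\theta^{(m)} = \sum_j \lambda_j(1-\lambda_j) = \tr\bigl(\mathcal{K}_n^{(m)}\restriction_{[-\theta,\theta]}\bigr) - \tr\bigl((\mathcal{K}_n^{(m)}\restriction_{[-\theta,\theta]})^2\bigr).
\]
The first trace is $\int_{-\theta}^{\theta} K_n^{(m)}(x,x)\,dx = \frac{n\theta}{\pi}$ (using $K_n^{(m)}(x,x) = \frac{n}{2\pi}$), and the second is $\int_{-\theta}^\theta\int_{-\theta}^\theta |K_n^{(m)}(x,y)|^2\,dx\,dy$. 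So everything reduces to estimating
\[
I := \int_{-\theta}^\theta\int_{-\theta}^\theta \frac{1}{(2\pi)^2}\,\frac{\sin^2\!\bigl(\tfrac{n(x-y)}{2}\bigr)}{m^2\sin^2\!\bigl(\tfrac{x-y}{2m}\bigr)}\,dx\,dy,
\]
and then $\var\mathcal{N}_\theta^{(m)} = \frac{n\theta}{\pi} - I$. I would change variables to $u = x-y$, obtaining a one-dimensional integral $\int_{-2\theta}^{2\theta}(2\theta - |u|)\,\frac{\sin^2(nu/2)}{(2\pi m \sin(u/2m))^2}\,du$, and compare $m\sin(u/2m)$ with $u/2$ from above and below as in \eqref{E:Taylor} to reduce to the $m=1$ case up to controlled errors; indeed the right-hand sides of the claimed bounds do not depend on $m$, which is the signal that this comparison is lossless enough.

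The concrete steps are: (i) establish the identity $\var\mathcal{N}_\theta^{(m)} = \frac{n\theta}{\pi} - I$ and reduce $I$ to the single integral in $u$; (ii) for the \emph{upper} bound on the variance when $\theta \le 1/n$, bound $\sin^2(nu/2)$ from below by $(nu/2)^2(1 - (nu/2)^2/3)$-type estimates or more crudely bound the whole integrand below to show $I$ is close to $\frac{n\theta}{\pi}$, giving $\var \le \frac{n^2\theta^2+2}{4}$ — here the point is that in the microscopic window the variance is genuinely of order $(n\theta)^2$, so one needs a second-order expansion of $\sin^2(nu/2)$ near $0$; (iii) for $1/n \le \theta \le \pi/2$, split the $u$-integral at $u \sim 1/n$: on the small-$u$ part use $\sin^2(nu/2) \le 1$ and $m\sin(u/2m) \ge u/2 \cdot(\text{const})$ to get an $O(1)$ contribution, while on $1/n \le |u| \le 2\theta$ replace $\sin^2(nu/2)$ by its average $1/2$ (the oscillation error is lower order, controlled by integrating by parts or by a direct bound on $\int \cos(nu) f(u)\,du$) and integrate $\frac{2\theta - |u|}{2\pi^2 u^2}$, which produces the $\frac{1}{2\pi^2}\log(n\theta)$ main term plus explicit constants, matching both the $\frac{1}{3\pi^2}\log(\tfrac{2n\theta}{3\pi})$ lower bound and the $\frac{1}{2}\log(e^{3/2}n\theta)$ upper bound after bookkeeping; (iv) finally, the last ``consequently'' statement is immediate: if $n\theta_n \to \infty$ then the lower bound forces $\var \to \infty$ (for $\theta_n$ eventually $\ge \frac{3\pi}{2n}$, which holds once $n\theta_n$ is large), while if $n\theta_n$ stays bounded then the upper bounds — the first case giving $O(1)$ and the second giving $\frac12\log(e^{3/2}n\theta_n) = O(1)$ — keep $\var$ bounded.

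The main obstacle is step (iii): getting clean \emph{explicit} constants (not just the $\log(n\theta)(1+o(1))$ asymptotic) out of the split integral, because one must simultaneously (a) control the replacement of $\sin^2(nu/2)$ by $1/2$ with an error that is genuinely $O(1)$ uniformly over the whole range of $\theta$ up to $\pi/2$, not just for large $n\theta$, and (b) handle the $m$-dependence so that it drops out of the final bound — this requires the two-sided comparison $\varphi - \tfrac16\varphi^3 \le m\sin(\varphi/m) \le \varphi$ to be applied carefully inside the integral, keeping track of the direction of each inequality for the upper versus the lower variance bound. The microscopic case (ii) is comparatively routine but still needs the quadratic correction to $\sin$ to land the ``$+2$'' and the factor $\tfrac14$; I would handle it by a direct Taylor estimate of the integrand on $[-2\theta,2\theta] \subseteq [-2/n,2/n]$. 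Everything else is elementary calculus once the determinantal identity in step (i) is in place.
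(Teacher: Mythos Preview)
Your plan is sound and would succeed, but the paper takes a shorter path at two points, and the comparison is instructive.

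First, the reduction in $m$: rather than comparing $m\sin(u/2m)$ to $u/2$ via Taylor inside the integral as you propose, the paper simply notes that $\mathcal{N}_{n,\theta}^{(m)}$ has the same distribution as $\mathcal{N}_{mn,\theta/m}$ (immediate from the definition of $\chi^{(m)}$), and since the claimed bounds depend on $(n,\theta)$ only through the product $n\theta$, it suffices to treat $m=1$. This eliminates all the direction-of-inequality bookkeeping you flag in (b). Second, having set $m=1$, the kernel is a projection, so the paper can write the variance directly as a single positive integral over $A\times A^c$ (the formula \eqref{E:variance-formula} quoted from \cite{MM-powers}) rather than as your difference $\tfrac{n\theta}{\pi}-I$; upper bounds then follow from crude pointwise estimates $\sin(z/2)\ge z/\pi$ and $\sin^2(nz/2)\le\min\bigl(1,(nz/2)^2\bigr)$. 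For the lower bound---your acknowledged main obstacle---the paper does \emph{not} average $\sin^2(nz/2)$ to $1/2$ and control the oscillation by parts. Instead it uses $\sin^2(nz/2)=1-\sin^2\bigl(\tfrac{n}{2}(z+\tfrac{\pi}{n})\bigr)$, shifts the variable by $\pi/n$, and after bounding $\tfrac{1}{z-\pi/n}\le\tfrac{2}{z}$ on $z\ge\eps\ge\tfrac{2\pi}{n}$ arrives at a self-referential inequality $J\ge\log(\cdots)-2J$, whence $J\ge\tfrac13\log(\cdots)$. This trick produces the constant $\tfrac{1}{3\pi^2}$ with essentially no error analysis, whereas your integration-by-parts route would need careful tracking to get explicit constants uniformly over the whole range $\tfrac{3\pi}{2n}\le\theta\le\tfrac{\pi}{2}$. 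Your approach has the virtue of not relying on the projection property (so it would generalize to kernels that are not restrictions of projections), but for this lemma the paper's route is noticeably cleaner.
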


\begin{proof}
  Observe that $\mathcal{N}_{n,[-\theta,\theta]}^{(m)}$ has the same
  distribution as
  $\mathcal{N}^{(1)}_{mn,[-\theta/m,\theta/m]} =
  \mathcal{N}_{mn,[-\theta/m,\theta/m]}$;
  since $m$ is fixed (i.e., independent of $n$), it therefore suffices
  to prove the estimates for $m=1$.  Using general formulae for
  determinantal point processes, it is shown in the proof of
  \cite[Proposition 8]{MM-powers} that
  \begin{equation}
    \label{E:variance-formula}
    \var \mathcal{N}_\theta
    = \frac{1}{2\pi^2} \left[\int_0^{2\theta}
      \frac{z\sin^2\left(\frac{nz}{2}\right)}
      {\sin^2\left(\frac{z}{2}\right)} \ dz
      + 2\theta \int_{2 \theta}^{\pi} \frac{\sin^2\left(\frac{nz}{2}\right)}
      {\sin^2\left(\frac{z}{2}\right)} \ dz \right].
  \end{equation}

  Now suppose that $\frac{3\pi}{2n} \le \theta \le \frac{\pi}{2}$. If $z \ge
  \eps \ge \frac{2\pi}{n}$, then $z - \frac{\pi}{n} \ge \frac{z}{2}$,
  and so
  \begin{align*}
      \int_\eps^{2\theta} \frac{\sin^2 \left(\frac{nz}{2}\right)}{z} \ dz
      & = \int_\eps^{2\theta} \frac{1 - \cos^2
        \left(\frac{nz}{2}\right)}{z}
      \ dz \\
      & = \log \left(\frac{2\theta}{\eps}\right) - \int_\eps^{2\theta}
      \frac{\sin^2 \left(\frac{nz}{2} + \frac{\pi}{2} \right)}{z}
      \ dz \\
      & = \log \left(\frac{2\theta}{\eps}\right) - \int_{\eps+
        \frac{\pi}{n}}^{2\theta + \frac{\pi}{n}}
      \frac{\sin^2 \left(\frac{nz}{2} \right)}{z - \frac{\pi}{n}}
      \ dz \\
      & \ge \log \left(\frac{2\theta}{\eps}\right) -
      \int_{2\theta}^{2\theta+ \frac{\pi}{n}} 
      \frac{\sin^2 \left(\frac{nz}{2} \right)}{z - \frac{\pi}{n}}
      \ dz
      - \int_\eps^{2\theta} 
      \frac{\sin^2 \left(\frac{nz}{2} \right)}{z - \frac{\pi}{n}}
      \ dz \\
      & \ge \log \left(\frac{2\theta}{\eps}\right) -
      \int_{2\theta}^{2\theta+ \frac{\pi}{n}} 
      \frac{1}{z - \frac{\pi}{n}}
      \ dz
      - 2 \int_\eps^{2\theta} 
      \frac{\sin^2 \left(\frac{nz}{2} \right)}{z}
      \ dz \\
      & = \log \left(\frac{2\theta - \frac{\pi}{n}}{\eps}\right)
      - 2 \int_\eps^{2\theta} 
      \frac{\sin^2 \left(\frac{nz}{2} \right)}{z}
      \ dz.
  \end{align*}
  Thus
  \[
  \int_\eps^{2\theta}
  \frac{\sin^2 \left(\frac{nz}{2}\right)}{z} \ dz \ge \frac{1}{3} \log
  \left(\frac{2\theta - \frac{\pi}{n}}{\eps}\right).
  \]
  Now setting $\eps = \sqrt{\frac{2\pi}{n}(2\theta - \frac{\pi}{n})}
  \ge \frac{2\pi}{n}$,
  \begin{equation*}
      \log \left(\frac{2\theta - \frac{\pi}{n}}{\eps}\right) 
      = \frac{1}{2} \log \left[\frac{n}{2\pi} \left(2\theta -
          \frac{\pi}{n}\right) \right] 
      \ge \frac{1}{2} \log \left(\frac{2n \theta}{3\pi}\right),
  \end{equation*}
  and so by \eqref{E:variance-formula},
  \[
  \var \mathcal{N}_\theta \ge \frac{1}{2\pi^2} \int_0^{2\pi}
  \frac{z\sin^2\left(\frac{nz}{2}\right)}
  {\sin^2\left(\frac{z}{2}\right)} \ dz \ge \frac{2}{\pi^2}
  \int_0^{2\pi} \frac{\sin^2\left(\frac{nz}{2}\right)} {z} \ dz \ge
  \frac{1}{3\pi^2} \log \left(\frac{2n \theta}{3\pi}\right).
  \]

  For the upper bound, observe that $\sin \left(\frac{z}{2}\right) \ge
  \frac {z}{\pi}$ for $0 \le z \le \pi$, thus
  \begin{equation*}
    \begin{split}
      2 \theta \int_{2\theta}^\pi \frac{\sin^2 \left( \frac{nz}{2}
        \right)}{\sin^2 \left( \frac{z}{2} \right)} \ dz & \le 2\theta
      \int_{2\theta}^\pi \frac{\pi^2}{z^2} \ dz \le \pi^2,
    \end{split}
  \end{equation*}
  and
  \begin{equation*}
    \begin{split}
      \int_0^{2\theta} \frac{z \sin^2 \left( \frac{nz}{2} \right)}{\sin^2
        \left( \frac{z}{2} \right)} \ dz & \le \int_0^{2\theta}
      \frac{\pi^2 n^2 z}{4} \ dz = \frac{\pi^2 n^2 \theta^2}{2}
    \end{split}
  \end{equation*}
  for any $\theta$, while if $\frac{1}{n} \le \theta \le \pi$, then
  \begin{equation*}
    \begin{split}
      \int_0^{2\theta} \frac{z \sin^2 \left( \frac{nz}{2} \right)}{\sin^2
        \left( \frac{z}{2} \right)} \ dz & \le \int_0^{2/n}
      \frac{\pi^2 n^2 z}{4} \ dz 
      + \int_{2/n}^{2\theta} \frac{\pi^2}{z} \ dz
      = \pi^2 \left[ \frac{1}{2} + \log
        (n\theta)\right].
      \qedhere
    \end{split}
  \end{equation*}
\end{proof}

\medskip

One of the consequences of the lemma is that it allows us to identify
the regime in which the (centered, normalized) counting function has a
Gaussian limit, and to provide the estimates of the rate of
convergence to Gaussian in that regime given in Proposition
\ref{T:counting-function-clt} below.  The real point of the
proposition is that the convergence of the centered, normalized
counting functions of either point process to a Gaussian limit is much
slower than the merging of distributions given in Theorem
\ref{T:TV-bound}, meaning that the resemblance between $\mathcal{N}_A$
and $\mathcal{N}_A^{(m)}$ is emphatically \emph{not} a consequence of
the central limit theorem.

\begin{prop}
  \label{T:counting-function-clt}
  For $0 \le \theta \le \frac{\pi}{2}$ and $n \ge 1$, define
  \[
  X_{n,\theta} := \frac{\mathcal{N}_{n,\theta} -
    \frac{n\theta}{\pi}}{\sqrt{\var \mathcal{N}_{n,\theta}}}.
  \]
  For each $n$, let $\theta_n\in\left[0,\frac{\pi}{2}\right]$.  The
  sequence $\{X_{n,\theta_n}\}$ converges weakly
  to the standard Gaussian
  distribution as $n \to \infty$ if and only if $n\theta_n \to
  \infty$.
   Moreover, whenever $\frac{3\pi}{n} \le \theta \le \pi$,
  \[
  \frac{3\sqrt{2}}{32\sqrt{\log \bigl(e^{3/2} n \theta \bigr)}} \le \sup_{t \in
    \R} \abs{ \Prob\left[ X_{n,\theta} \le t \right] -
    \frac{1}{\sqrt{2\pi}} \int_{-\infty}^t e^{-x^2/2} \ dx } \le
  \frac{3 \sqrt{3} \pi}{\sqrt{\log
      \left(\frac{2n\theta}{3\pi}\right)}}.
  \]
\end{prop}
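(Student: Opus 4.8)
The plan is to exploit the Bernoulli representation of the counting function. The eigenangle process restricted to $[-\theta,\theta]$ is again determinantal, with a self-adjoint, finite-rank kernel obtained by compressing $\mathcal{K}_n=\mathcal{K}_n^{(1)}$ to that arc; its eigenvalues $\lambda_1,\dots,\lambda_n$ lie in $[0,1]$, so by Lemma~\ref{T:HKPV} one has $\mathcal{N}_{n,\theta}\stackrel{d}{=}\sum_{j=1}^n\xi_j$ with the $\xi_j$ independent Bernoulli$(\lambda_j)$. In particular $\E\mathcal{N}_{n,\theta}=\sum_j\lambda_j=\tfrac{n\theta}{\pi}$ and $\var\mathcal{N}_{n,\theta}=\sum_j\lambda_j(1-\lambda_j)$, so $X_{n,\theta}$ is a centered, normalized sum of independent, uniformly bounded variables, and both the Lindeberg--Feller theorem and Berry--Esseen apply. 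For the dichotomy: if $n\theta_n\to\infty$ then $\var\mathcal{N}_{n,\theta_n}\to\infty$ by Lemma~\ref{T:variance-log}, and since the centered summands $\xi_j-\lambda_j$ are bounded by $1$ the Lindeberg condition holds trivially once $\var\mathcal{N}_{n,\theta_n}$ exceeds any fixed level, so $X_{n,\theta_n}\Rightarrow N(0,1)$. Conversely, if $n\theta_n\not\to\infty$ I would pass to a subsequence on which $n\theta_n$ stays bounded; then $\var\mathcal{N}_{n,\theta_n}$ stays bounded (Lemma~\ref{T:variance-log}), so $X_{n,\theta_n}$ is supported on an arithmetic progression whose spacing $1/\sqrt{\var\mathcal{N}_{n,\theta_n}}$ is bounded below, and a sequence of laws carried by progressions of uniformly positive spacing cannot converge weakly to an absolutely continuous law; hence the CLT fails along the subsequence.

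For the upper bound in the quantitative estimate I would invoke the Berry--Esseen theorem for sums of independent, not necessarily identically distributed, random variables. The third absolute central moment of a Bernoulli$(\lambda_j)$ variable is $\lambda_j(1-\lambda_j)\bigl(\lambda_j^2+(1-\lambda_j)^2\bigr)\le\lambda_j(1-\lambda_j)$, so the sum of third absolute central moments is at most $\var\mathcal{N}_{n,\theta}$, and Berry--Esseen (with a universal constant no larger than $3$) gives
\[
\sup_{t\in\R}\Bigl|\Prob[X_{n,\theta}\le t]-\tfrac1{\sqrt{2\pi}}\int_{-\infty}^{t}e^{-x^2/2}\,dx\Bigr|\le\frac{3}{\sqrt{\var\mathcal{N}_{n,\theta}}}.
\]
Since $\theta\ge\tfrac{3\pi}{n}$ lies in the range where Lemma~\ref{T:variance-log} applies, $\var\mathcal{N}_{n,\theta}\ge\tfrac1{3\pi^2}\log\bigl(\tfrac{2n\theta}{3\pi}\bigr)$, and substituting produces the claimed right-hand side.

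The lower bound is purely a lattice-versus-continuum comparison. Writing $\Phi$ for the standard Gaussian distribution function and $\varphi$ for its density, $F(t):=\Prob[X_{n,\theta}\le t]$ is a step function, constant on each cell of the arithmetic progression of spacing $h:=1/\sqrt{\var\mathcal{N}_{n,\theta}}$ that supports $X_{n,\theta}$, whereas $\Phi$ is continuous. Taking $[t_k,t_{k+1})$ to be the cell containing $0$ and comparing $F$ and $\Phi$ at $t_k$ and at the left limit at $t_{k+1}$ gives $\Phi(t_{k+1})-\Phi(t_k)\le 2\sup_t|F(t)-\Phi(t)|$; since $t_{k+1}-t_k=h$ with $t_k\le 0<t_{k+1}$, the increment $\Phi(t_{k+1})-\Phi(t_k)$ is at least $\Phi(h)-\tfrac12=\int_0^h\varphi$. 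Hence $\sup_t|F-\Phi|\ge\tfrac12\int_0^{h}\varphi\ge\tfrac12\,h_0\,\varphi(h_0)$, where $h_0:=\sqrt{2/\log(e^{3/2}n\theta)}$ and $h\ge h_0$ follows from the upper estimate $\var\mathcal{N}_{n,\theta}\le\tfrac12\log(e^{3/2}n\theta)$ of Lemma~\ref{T:variance-log} together with the monotonicity of $\int_0^h\varphi$ in $h$. Using $n\theta\ge 3\pi$ to bound $e^{-h_0^2/2}=e^{-1/\log(e^{3/2}n\theta)}$ below by an absolute constant and collecting terms then yields $\sup_t|F-\Phi|\ge\frac{3\sqrt2}{32\sqrt{\log(e^{3/2}n\theta)}}$.

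I expect the only genuinely delicate point to be getting the lower-bound comparison exactly right: one must compare the flat step of $F$ against the increment of $\Phi$ across a \emph{whole} cell of the lattice (not half a cell), and verify that the cell straddling the origin always captures Gaussian mass at least $\int_0^h\varphi$ regardless of where its endpoints fall. Once this is in place, the variance estimates of Lemma~\ref{T:variance-log} finish both quantitative bounds with enough slack that the displayed constants require no care, and the converse half of the dichotomy is similarly soft, resting only on the impossibility of approximating an absolutely continuous law by lattice laws of bounded spacing.
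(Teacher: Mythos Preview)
Your argument is correct and follows essentially the same route as the paper: the Bernoulli representation of Lemma~\ref{T:HKPV} together with Berry--Esseen gives the upper bound $3/\sqrt{\var\mathcal{N}_{n,\theta}}$ (the paper bounds $\E|\xi_j-\lambda_j|^3\le\E(\xi_j-\lambda_j)^2$ via $|\xi_j-\lambda_j|\le 1$ rather than computing the third moment explicitly, but this is the same estimate), and the dichotomy is handled in the same way via the variance bounds of Lemma~\ref{T:variance-log}.

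The one genuine difference is in the lower bound. The paper does not look at a specific lattice cell; instead it uses Chebyshev's inequality to show that the distribution function of any integer-valued $X$ has a jump of size at least $\tfrac{3}{16\sqrt{\var X}}$ somewhere, so its Kolmogorov distance to \emph{any} continuous law is at least $\tfrac{3}{32\sqrt{\var X}}$. Your argument is the dual observation: rather than finding a point where $F$ jumps while $\Phi$ does not, you find an interval where $F$ is flat while $\Phi$ moves. Both are one-line lattice-versus-continuum comparisons and yield the same order $1/\sqrt{\var}$; the paper's version is slightly more general (it applies to any continuous comparison distribution, not just the Gaussian), while yours avoids invoking Chebyshev and is marginally more direct for the specific statement at hand.
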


\begin{rmk}
For $m\in\N$ fixed, a central limit theorem for
$\mathcal{N}_{n,\theta}^{(m)}=\mathcal{N}_{nm,\frac{\theta}{m}}$ follows immediately
from Proposition \ref{T:counting-function-clt}.
\end{rmk}
\smallskip

\begin{proof}
  First observe that for any integer-valued random variable $X$ with
  finite second moment, it follows from  Chebychev's inequality that
  \begin{equation*}\begin{split}
      \frac{3}{4} \le \Prob \left[ \abs{X - \E X} < 2 \sqrt{\var
          X}\right] 
      &= \sum_{\substack{k \in \Z, \\ \abs{k - \E X} < 2 \sqrt{\var X}}}
      \Prob[X = k] 
      \le \max_k \Prob[X=k] \left(4 \sqrt{ \var X}\right).
    \end{split} 
  \end{equation*} 
  The cumulative distribution function of $X$ thus has a jump of at
  least $\frac{3}{16\sqrt{\var X}}$ at some integer, and so
  \[
  \sup_{t\in \R} \abs{ \Prob [X \le t] - \Prob [Y \le t]} \ge
    \frac{3}{32 \sqrt{\var X}}
  \]
  for any continuous random variable $Y$.  Now,
  \[
  \sup_{t \in \R} \abs{ \Prob\left[ X_{n,\theta} \le t \right] -
    \frac{1}{\sqrt{2\pi}} \int_{-\infty}^t e^{-x^2/2} \ dx }
  =\sup_{t\in\R}\Big|\Prob\left[\mathcal{N}_\theta\le
    t\right]-\Prob\left[Y\le t\right]\Big|,
  \]
  where $Y$ is a Gaussian random variable with the same mean and
  variance as $\mathcal{N}_\theta$.  Since $\mathcal{N}_\theta$ is
  integer-valued, together with Lemma \ref{T:variance-log} this proves
  both the lower bound in the proposition and the fact that
  $X_{n, \theta_n}$ can only have a Gaussian limit if
  $n\theta_n \to \infty$.

  For the other estimate, the Berry--Esseen theorem (see, e.g.,
  \cite[Theorem XVI.5.1]{Feller2}) implies that if $\{Y_i\}_{i=1}^n$
  are independent random variables in $[0,1]$ and $X = \sum_{i=1}^n
  Y_i$, then
  \begin{equation*}
    \begin{split}
      \sup_{t\in\R} \abs{\Prob\left[\frac{X - \E X}{\sqrt{\var
              X}}\right] - \frac{1}{\sqrt{2\pi}} \int_{-\infty}^t
          e^{-x^2} \ dx }
      & \le \frac{3}{(\var X)^{3/2}} \sum_{i=1}^n \E \abs{Y_i -
        \E Y_i}^3 \\
      & \le \frac{3}{(\var X)^{3/2}} \sum_{i=1}^n \E (Y_i - \E Y_i)^2 \\
      & = \frac{3}{\sqrt{\var X}}.
    \end{split}
  \end{equation*}
  By Lemma \ref{T:HKPV}, this may be applied to
  $X_{n,\theta}$, and so Lemma \ref{T:variance-log} implies the upper
  bound in the proposition.
\end{proof}

As discussed in the introduction, we conjecture that Theorem
\ref{T:TV-bound} holds for any shrinking sequence of sets
$A_n \subseteq [-\pi, \pi)$.  We are not able to prove full
distributional comparisons for the entire regime; however the
following result shows that equality of means and asymptotic equality
of variances does hold throughout the entire mesoscopic regime. That
is, if $\{A_n\}$ is any sequence of subsets of $[-\pi,\pi)$ such that
$\diam A_n \le \pi$ eventually and $\abs{A_n} \to 0$ (in particular,
if $\diam A_n \to 0$), then
\[
\abs{\var \mathcal{N}_{A_n} - \var \mathcal{N}^{(m)}_{A_n}} \to 0
\]
as $n \to \infty$.  For context, recall that it has already been shown
that $\var\mathcal{N}_{\theta}$ itself, and thus
$\var\mathcal{N}_\theta^{(m)}$ as well, is of order $\log(n\theta)$
when $\theta \ge \frac{1}{n}$.

\begin{thm}
  \label{T:variances-almost-equal}
  For each $m , n \ge 1$ and $A \subseteq [-\pi, \pi)$,
  \[
  \E\mathcal{N}_A=\E\mathcal{N}_A^{(m)}.
  \] 
  If in addition $\diam A \le \pi$, then
  \[
  0 \le \var \mathcal{N}_A - \var
  \mathcal{N}^{(m)}_A \le
  \frac{\abs{A}^2}{4\pi^2}.
  \]
\end{thm}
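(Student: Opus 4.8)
The plan is to reduce both the means and the variances to the explicit kernels $K_n=K_n^{(1)}$ and $K_n^{(m)}$ of Proposition~\ref{T:kernels}. For the means: since $\tfrac{\sin t}{t}\to1$ as $t\to0$, letting $y\to x$ in those formulas gives $K_n(x,x)=K_n^{(m)}(x,x)=\tfrac{n}{2\pi}$, and since the first joint intensity of a determinantal point process is $\rho_1(x)=K(x,x)$ we get $\E\mathcal{N}_A=\int_A K_n(x,x)\,dx=\tfrac{n\abs{A}}{2\pi}=\int_A K_n^{(m)}(x,x)\,dx=\E\mathcal{N}_A^{(m)}$ for every $A$, with no restriction needed. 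For the variances, I would use the standard number‑variance formula: restricting $\chi$ (resp.\ $\chi^{(m)}$) to $A$ yields a determinantal point process whose kernel is the restriction of $K_n$ (resp.\ $K_n^{(m)}$), with self‑adjoint trace class operator $\mathcal{K}_A$ having eigenvalues $\mu_j\in[0,1]$; applying Lemma~\ref{T:HKPV} to this restricted process gives $\var\mathcal{N}_A=\sum_j\mu_j(1-\mu_j)=\tr\mathcal{K}_A-\tr(\mathcal{K}_A^2)$, and since $\tr\mathcal{K}_A=\int_A K(x,x)\,dx$ and $\tr(\mathcal{K}_A^2)$ equals the squared Hilbert--Schmidt norm of $\mathcal{K}_A$, which (as already invoked in the proof of Proposition~\ref{T:general-bound}) is $\int_A\int_A\abs{K(x,y)}^2\,dx\,dy$, we obtain $\var\mathcal{N}_A=\int_A K(x,x)\,dx-\int_A\int_A\abs{K(x,y)}^2\,dx\,dy$. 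Applying this with $K=K_n$ and with $K=K_n^{(m)}$ and subtracting, the diagonal terms cancel by the mean computation, leaving
\[
\var\mathcal{N}_A-\var\mathcal{N}_A^{(m)}=\int_A\int_A\Bigl(\abs{K_n^{(m)}(x,y)}^2-\abs{K_n(x,y)}^2\Bigr)\,dx\,dy .
\]

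Everything now reduces to a pointwise estimate of the integrand. With $u=x-y$ and Proposition~\ref{T:kernels},
\[
\abs{K_n^{(m)}(x,y)}^2-\abs{K_n(x,y)}^2=\frac{\sin^2\bigl(\tfrac{nu}{2}\bigr)}{4\pi^2}\left(\frac{1}{m^2\sin^2\bigl(\tfrac{u}{2m}\bigr)}-\frac{1}{\sin^2\bigl(\tfrac{u}{2}\bigr)}\right) .
\]
The hypothesis $\diam A\le\pi$ forces $\tfrac{\abs{u}}{2}\le\tfrac{\pi}{2}$ for all $x,y\in A$, so \eqref{E:Taylor} applies with $\varphi=\tfrac{\abs{u}}{2}$, giving $0<\sin\tfrac{\abs{u}}{2}\le m\sin\tfrac{\abs{u}}{2m}\le\tfrac{\abs{u}}{2}$. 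The inequality $\sin\tfrac{\abs{u}}{2}\le m\sin\tfrac{\abs{u}}{2m}$ shows the parenthetical factor keeps a single sign throughout $A\times A$, which supplies one of the two inequalities in the statement. For the other, I would factor $\sin^{-2}\tfrac{u}{2}$ out of the parenthetical factor: from $\sin\tfrac{u}{2}\ge\tfrac{u}{2}-\tfrac16\bigl(\tfrac{u}{2}\bigr)^3$ and $m\sin\tfrac{u}{2m}\le\tfrac{u}{2}$ one gets $1-\tfrac{\sin^2(u/2)}{m^2\sin^2(u/2m)}\le\tfrac{u^2}{12}$, while Jordan's inequality $\sin t\ge\tfrac{2t}{\pi}$ gives $\sin^{-2}\tfrac{u}{2}\le\tfrac{\pi^2}{u^2}$; with $\sin^2\bigl(\tfrac{nu}{2}\bigr)\le1$ these combine to bound the absolute value of the integrand by $\tfrac{\pi^2}{12}\cdot\tfrac{1}{4\pi^2}=\tfrac{1}{48}\le\tfrac{1}{4\pi^2}$. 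Integrating over $A\times A$, which has Lebesgue measure $\abs{A}^2$, gives $\bigl|\var\mathcal{N}_A-\var\mathcal{N}_A^{(m)}\bigr|\le\tfrac{\abs{A}^2}{4\pi^2}$, and together with the sign information this is the asserted two‑sided estimate.

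I expect the only point requiring care to be the constant $\tfrac{1}{4\pi^2}$: estimating $\tfrac{1}{\sin\varphi}-\tfrac{1}{m\sin(\varphi/m)}\le\tfrac{\varphi}{3}$ as in the proof of Theorem~\ref{T:TV-bound} and then bounding $\abs{K_n(x,y)}+\abs{K_n^{(m)}(x,y)}$ bluntly overshoots the target constant by a small factor, so one must use either the factoring of $\sin^{-2}\tfrac{u}{2}$ above, or the sharper estimate $\tfrac{1}{\sin\varphi}-\tfrac{1}{m\sin(\varphi/m)}\le\tfrac{\varphi}{6-\varphi^2}$ together with $\abs{K_n(x,y)}+\abs{K_n^{(m)}(x,y)}\le\tfrac{1}{\abs{u}}$ (valid since $\sin\tfrac{\abs{u}}{2}$ and $m\sin\tfrac{\abs{u}}{2m}$ are both at least $\tfrac{\abs{u}}{\pi}$ by Jordan's inequality). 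Neither refinement is deep, so I anticipate no genuine obstacle beyond this bookkeeping and the routine verification of the elementary inequalities used.
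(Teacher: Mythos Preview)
Your approach is exactly the paper's: reduce to the integral formula
\[
\var\mathcal{N}_A-\var\mathcal{N}_A^{(m)}=\int_A\int_A\Bigl(|K_n^{(m)}(x,y)|^2-|K_n(x,y)|^2\Bigr)\,dx\,dy
\]
via the identity $\var\mathcal{N}_A=\tr\mathcal{K}_A-\tr\mathcal{K}_A^2$, and then bound the function $\frac{1}{\sin^2\varphi}-\frac{1}{m^2\sin^2(\varphi/m)}$ pointwise on $(0,\pi/2]$. The paper bounds this by $1$ via $\frac{1}{(\varphi-\varphi^3/6)^2}-\frac{1}{\varphi^2}\le 1$; your factoring together with Jordan's inequality gives the slightly sharper $\pi^2/12$, but either suffices. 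Your derivation of the variance formula, your mean computation, and your pointwise estimates are all correct.

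There is one point where you are too vague, and it hides a real discrepancy. You write that the parenthetical factor ``keeps a single sign'' and that this ``supplies one of the two inequalities,'' but you never say which sign. From $\sin\tfrac{|u|}{2}\le m\sin\tfrac{|u|}{2m}$ your integrand is $\le 0$, so your formula actually gives $\var\mathcal{N}_A-\var\mathcal{N}_A^{(m)}\le 0$, the \emph{opposite} of the stated lower bound. (A direct check with $n=1$, $m=2$, $A=[-\pi/2,\pi/2]$ yields $\var\mathcal{N}_A=\tfrac14$ and $\var\mathcal{N}_A^{(2)}=\tfrac38-\tfrac{1}{\pi^2}\approx 0.274$, confirming this direction.) The paper's displayed integral has the two kernel terms swapped relative to your (correct) derivation, which is how it arrives at the sign in the statement; the inequality that both arguments genuinely establish is $0\le\var\mathcal{N}_A^{(m)}-\var\mathcal{N}_A\le\tfrac{|A|^2}{4\pi^2}$. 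So your computation is right, but your concluding sentence should state explicitly which sign you have obtained rather than asserting that it matches the printed lower bound.
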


\begin{proof}
  By Proposition \ref{T:kernels} and a general formula for the
  variance of the counting function of a determinantal point process
  (see \cite[Appendix B]{Gu}),
  \[
  \var \mathcal{N}_A - \var \mathcal{N}_A^{(m)} =
  \frac{1}{4\pi^2} \int_A \int_A \sin^2
  \left(\frac{n(x-y)}{2}\right)
  \left(\frac{1}{\sin^2\left(\frac{x-y}{2}\right)} - \frac{1}{m^2 \sin^2
      \left( \frac{x-y}{2m}\right)}\right) \ dx \ dy.
  \]
   As in the proof of Theorem
  \ref{T:TV-bound}, for $0 < \varphi \le \frac{\pi}{2}$,
  \begin{equation*}
      0 \le \frac{1}{\sin^2 \varphi} - \frac{1}{m^2 \sin^2
          \left(\frac{\varphi}{m}\right)} 
        \le \frac{1}{\left(\varphi - \frac{1}{6} \varphi^3\right)^2}
        - \frac{1}{\varphi^2} \\
        \le 1,
  \end{equation*}
  from which the result follows.
\end{proof}

\section{Comparison of joint intensities}\label{S:correlations}

We conclude with the surprising fact that the joint intensities of the
process $\chi^{(m)}$ are always larger than those of the eigenvalue
process $\chi$; the implications of this observation remain mysterious
(at least to us).

\begin{prop}
  \label{T:intensities-inequality}
  For each $m$, $n$, and $k$, let $\rho_k:[0,2\pi)^k \to \R$
  denote the $k^{\mathrm{th}}$ joint intensity of the determinantal
  point process $\chi$, and let $\rho_k^{(m)}$ denote the
  $k^{\mathrm{th}}$ joint intensity of the determinantal point process
  $\chi^{(m)}$.  Then for each $x_1, \dots, x_k \in [0,2\pi)$,
  \[
  \rho_k^{(m)} (x_1, \dots, x_k) \ge \rho_k (x_1, \dots, x_k).
  \]
\end{prop}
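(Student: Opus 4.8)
The plan is to compare the two determinantal kernels directly on the diagonal-complement structure of the determinants $\det[K(x_i,x_j)]$ and $\det[K^{(m)}(x_i,x_j)]$. By Proposition \ref{T:kernels}, both kernels have the same "numerator," $\frac{1}{2\pi}\sin\bigl(\tfrac{n(x-y)}{2}\bigr)$, and differ only in the denominator: $K_n^{(1)}$ has $\sin\bigl(\tfrac{x-y}{2}\bigr)$ while $K_n^{(m)}$ has $m\sin\bigl(\tfrac{x-y}{2m}\bigr)$. The first observation to record is that these kernels are of the form $K(x,y) = f(x-y)$ with $f$ a ratio whose dependence on $m$ is only through a positive multiplicative factor applied entrywise in a controlled way. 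More precisely, write $g_m(t) := \frac{1}{m\sin(t/2m)}$ and $h(t):=\sin\bigl(\tfrac{nt}{2}\bigr)$, so that $K_n^{(m)}(x,y) = \frac{1}{2\pi} h(x-y) g_m(x-y)$. The key analytic input, already essentially present in \eqref{E:Taylor}, is that $0 < m\sin(t/2m) \le m'\sin(t/2m')$ whenever $m \ge m' \ge 1$ and $|t|$ is in the relevant range, so $g_m(t) \ge g_{m'}(t) > 0$; in particular $g_m(t) \ge g_1(t)$.

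Next I would reduce the inequality between determinants to a statement about positive-definiteness. The matrix $M^{(m)} := [K_n^{(m)}(x_i,x_j)]_{i,j=1}^k$ is positive semidefinite (it is a Gram-type matrix, being a finite section of the determinantal kernel, whose operator has spectrum in $[0,1]$). The difference $M^{(m)} - M^{(1)}$ has $(i,j)$ entry $\frac{1}{2\pi}\sin\bigl(\tfrac{n(x_i-x_j)}{2}\bigr)\bigl(g_m(x_i-x_j) - g_1(x_i-x_j)\bigr)$. If I can show that $M^{(m)} - M^{(1)} \succeq 0$ and then invoke the standard fact that $0 \preceq B \preceq A$ implies $\det B \le \det A$ (monotonicity of the determinant on positive semidefinite matrices under the Loewner order), the proposition follows at once. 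So the crux is: the entrywise "correction factor" applied to the fixed kernel $\frac{1}{2\pi}\sin\bigl(\tfrac{n(x-y)}{2}\bigr)$ produces a larger matrix in the Loewner order.

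The main obstacle — and where I would concentrate the real work — is establishing that positive-semidefiniteness of the difference. The natural tool is the Schur product theorem: if I can write $g_m(t) - g_1(t)$, or rather the full correction $K_n^{(m)}(x,y) - K_n^{(1)}(x,y)$, in a form that is itself (entrywise) a positive-definite kernel, then since $M^{(m)} - M^{(1)}$ is a Hadamard/Schur product of positive-definite matrices it is positive semidefinite. A promising route is to expand $\frac{1}{m\sin(t/2m)}$ in a series. Using the classical partial-fraction / Mittag-Leffler expansion $\frac{1}{\sin u} = \sum_{\ell \in \Z} \frac{(-1)^\ell}{u - \ell\pi}$, or better the expansion of $\csc$ leading to $\frac{1}{m\sin(t/2m)} = \frac{2}{t}\prod$-type or Fourier-type representations, one hopes to see $K_n^{(m)}$ as a nonnegative combination (an integral or sum with positive weights) of kernels of the form $\cos(\alpha(x-y))$ and $\mathrm{sinc}$-type kernels, each manifestly positive definite; then $K_n^{(m)} - K_n^{(1)}$ inherits nonnegative weights because $g_m \ge g_1$ termwise. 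Concretely, since $K_n^{(m)}$ is the kernel of a determinantal process its operator $\mathcal{K}_n^{(m)}$ has spectrum in $[0,1]$, and one can try to show that $\mathcal{K}_n^{(m)} - \mathcal{K}_n^{(1)}$ is itself a nonnegative operator by computing its action on the Fourier basis $e^{i\ell x}$ — the eigenvalues of these finite-rank convolution operators are explicit, and the claim should reduce to a term-by-term comparison of these eigenvalues, which is exactly the inequality $m\sin(t/2m)\le\sin(t/2)$ integrated against the positive numerator. If that operator-level comparison $\mathcal{K}_n^{(m)} \succeq \mathcal{K}_n^{(1)}$ holds, then every finite section satisfies $M^{(m)} \succeq M^{(1)} \succeq 0$, and determinant monotonicity closes the argument.
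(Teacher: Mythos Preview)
Your approach has a genuine gap: the Loewner inequality $M^{(m)} \succeq M^{(1)}$ that you aim to establish is false in general, so the Schur-product and operator-comparison strategies you outline cannot succeed. The reason is a trace obstruction. Both kernels satisfy $K_n^{(m)}(x,x) = \frac{n}{2\pi}$ (take the limit $y\to x$ in Proposition~\ref{T:kernels}), so the diagonal of $M^{(m)} - M^{(1)}$ is identically zero; equivalently, your own formula for the $(i,j)$ entry of the difference vanishes when $i=j$. A positive semidefinite matrix with zero diagonal is the zero matrix, so $M^{(m)} \succeq M^{(1)}$ would force $M^{(m)} = M^{(1)}$, which is clearly false for $k \ge 2$ and distinct points. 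The same trace argument kills the operator-level version $\mathcal{K}_n^{(m)} \succeq \mathcal{K}_n^{(1)}$: both operators have trace $n$.

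The paper's proof avoids the Loewner order entirely. It switches to the equivalent kernel $T_n(x,y) = \sum_{j=0}^{n-1} e^{ij(x-y)}$ and observes that the rescaled kernel satisfies
\[
T_n^{(m)}(x,y) = \frac{1}{m}\sum_{p=0}^{m-1} e^{ipx/m}\, T_n(x,y)\, e^{-ipy/m},
\]
so that the matrix $[T_n^{(m)}(x_i,x_j)]$ is an \emph{average of unitary conjugates} $D^p [T_n(x_i,x_j)] (D^p)^*$ of $[T_n(x_i,x_j)]$, each of which has the same determinant. The inequality $\rho_k^{(m)} \ge \rho_k$ then follows from Minkowski's determinant inequality (concavity of $A \mapsto (\det A)^{1/k}$ on positive semidefinite matrices), which is exactly the right tool when comparing a matrix to an average of matrices with the same determinant but no Loewner relation among them.
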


\begin{proof}
  For this proof we use a different kernel which also generates the
  point process $\chi$ (see \cite{Dyson} or \cite[Section 5.2]{KaSa})
  :
  \[
  T_n(x,y) = \sum_{j=0}^{n-1} e^{ij(x-y)},
  \]
  which, by the same change of variables used in the proof of
  Proposition \ref{T:kernels}, implies that $\chi^{(m)}$ is generated
  by the kernel
  \begin{equation*}
    \begin{split}
      T_n^{(m)}(x,y) & = \frac{1}{m} \sum_{j=0}^{mn-1} e^{ij(x-y)/m} \\
      & = \frac{1}{m} \sum_{p=0}^{m-1} \sum_{q=0}^{n-1}
      e^{i(mq+p)(x-y)/m} \\
      & = \frac{1}{m} \sum_{p=0}^{m-1} e^{ipx/m} T_n(x,y) e^{-ipy/m}.
    \end{split}
  \end{equation*}
  It follows that
  \[
  [T_n^{(m)}(x_j,x_\ell)]_{j,\ell=1}^k = \frac{1}{m} \sum_{p=0}^{m-1} D^p
  [T_n(x_j,x_\ell)] (D^p)^*,
  \]
  where $D = \operatorname{diag} (e^{ix_1/m}, \dots, e^{ix_k/m})$ is a
  diagonal unitary matrix, and so by Minkowski's determinant
  inequality \cite[Corollary II.3.21]{Bhatia},
  \begin{equation*}
    \begin{split}
      \bigl(\rho^{(m)}_k(x_1, \dots, x_k)\bigr)^{1/k}
      & = \left(\det \left[ \frac{1}{m} \sum_{p=0}^{m-1} D^p
        [T_n(x_j,x_\ell)] (D^p)^* \right] \right)^{1/k} \\
      & \ge \frac{1}{m} \sum_{p=0}^{m-1} \left(\det \bigl(  D^p
        [T_n(x_j,x_\ell)] (D^p)^*\bigr) \right)^{1/k} \\
      & = \frac{1}{m} \sum_{p=0}^{m-1} \left(\det
        [T_n(x_j,x_\ell)] \right)^{1/k} \\
      & = \rho_k(x_1, \dots, x_k)^{1/k}.
      \qedhere
    \end{split}
  \end{equation*}
\end{proof}

\section*{Acknowledgements}

This research was partially supported by grants from the U.S.\
National Science Foundation (DMS-1308725 to E.M.) and the Simons
Foundation (\#315593 to M.M.).  This work was partly carried out while
the authors were visiting the Institut de Math\'ematiques de Toulouse
at the Universit\'e Paul Sabatier; the authors thank them for their
generous hospitality.

\bibliographystyle{amsplain}
\bibliography{unitary-similarity}

\providecommand{\bysame}{\leavevmode\hbox to3em{\hrulefill}\thinspace}
\providecommand{\MR}{\relax\ifhmode\unskip\space\fi MR }
% \MRhref is called by the amsart/book/proc definition of \MR.
\providecommand{\MRhref}[2]{%
  \href{http://www.ams.org/mathscinet-getitem?mr=#1}{#2}
}
\providecommand{\href}[2]{#2}
\begin{thebibliography}{10}

\bibitem{AGZ}
G.~W. Anderson, A.~Guionnet, and O.~Zeitouni, \emph{An introduction to random
  matrices}, Cambridge Studies in Advanced Mathematics, vol. 118, Cambridge
  University Press, Cambridge, 2010. \MR{2760897 (2011m:60016)}

\bibitem{Bhatia}
R.~Bhatia, \emph{Matrix analysis}, Graduate Texts in Mathematics, vol. 169,
  Springer-Verlag, New York, 1997. \MR{1477662 (98i:15003)}

\bibitem{CD03}
M.~Coram and P.~Diaconis, \emph{New tests of the correspondence between unitary
  eigenvalues and the zeros of {R}iemann's zeta function}, J. Phys. A
  \textbf{36} (2003), no.~12, 2883--2906, Random matrix theory. \MR{1986397
  (2004j:11098)}

\bibitem{CoLe}
O.~Costin and J.~L. Lebowitz, \emph{Gaussian fluctuation in random matrices},
  Phys. Rev. Lett. \textbf{75} (1995), no.~1, 69--72. \MR{3155254}

\bibitem{Dyson}
F.~J. Dyson, \emph{Correlations between eigenvalues of a random matrix}, Comm.
  Math. Phys. \textbf{19} (1970), 235--250. \MR{0278668 (43 \#4398)}

\bibitem{Feller2}
W.~Feller, \emph{An introduction to probability theory and its applications.
  {V}ol. {II}.}, second ed., John Wiley \& Sons, Inc., New York-London-Sydney,
  1971. \MR{0270403 (42 \#5292)}

\bibitem{Gu}
J.~Gustavsson, \emph{Gaussian fluctuations of eigenvalues in the {GUE}}, Ann.
  Inst. H. Poincar\'e Probab. Statist. \textbf{41} (2005), no.~2, 151--178.
  \MR{2124079 (2005k:60074)}

\bibitem{HKPV06}
J.~B. Hough, M.~Krishnapur, Y.~Peres, and B.~Vir{\'a}g, \emph{Determinantal
  processes and independence}, Probab. Surv. \textbf{3} (2006), 206--229.
  \MR{2216966 (2006m:60068)}

\bibitem{Kato95}
T.~Kato, \emph{Perturbation theory for linear operators}, Classics in
  Mathematics, Springer-Verlag, Berlin, 1995, Reprint of the 1980 edition.
  \MR{1335452 (96a:47025)}

\bibitem{KaSa}
N.~M. Katz and P.~Sarnak, \emph{Random matrices, {F}robenius eigenvalues, and
  monodromy}, American Mathematical Society Colloquium Publications, vol.~45,
  American Mathematical Society, Providence, RI, 1999. \MR{1659828
  (2000b:11070)}

\bibitem{Macchi}
O.~Macchi, \emph{The coincidence approach to stochastic point processes},
  Advances in Appl. Probability \textbf{7} (1975), 83--122. \MR{0380979 (52
  \#1876)}

\bibitem{MM-powers}
E.~S. Meckes and M.~W. Meckes, \emph{Spectral measures of powers of random
  matrices}, Electron. Commun. Probab. \textbf{18} (2013), no.~78, 1--13.
  \MR{3109633}

\bibitem{Mehta}
M.~L. Mehta, \emph{Random matrices}, third ed., Pure and Applied Mathematics
  (Amsterdam), vol. 142, Elsevier/Academic Press, Amsterdam, 2004. \MR{2129906
  (2006b:82001)}

\bibitem{Ra97}
E.~M. Rains, \emph{High powers of random elements of compact {L}ie groups},
  Probab. Theory Related Fields \textbf{107} (1997), no.~2, 219--241.
  \MR{1431220 (98b:15026)}

\bibitem{Ra03}
\bysame, \emph{Images of eigenvalue distributions under power maps}, Probab.
  Theory Related Fields \textbf{125} (2003), no.~4, 522--538. \MR{1974413
  (2004e:15029)}

\bibitem{Soshnikov}
A.~Soshnikov, \emph{Determinantal random point fields}, Uspekhi Mat. Nauk
  \textbf{55} (2000), no.~5(335), 107--160. \MR{1799012 (2002f:60097)}

\bibitem{So00}
A.~B. Soshnikov, \emph{Gaussian fluctuation for the number of particles in
  {A}iry, {B}essel, sine, and other determinantal random point fields}, J.
  Statist. Phys. \textbf{100} (2000), no.~3-4, 491--522. \MR{1788476
  (2001m:82006)}

\bibitem{Villani}
C.~Villani, \emph{Optimal transport, old and new}, Grundlehren der
  Mathematischen Wissenschaften [Fundamental Principles of Mathematical
  Sciences], vol. 338, Springer-Verlag, Berlin, 2009. \MR{2459454
  (2010f:49001)}

\bibitem{Wieand}
K.~Wieand, \emph{Eigenvalue distributions of random unitary matrices}, Probab.
  Theory Related Fields \textbf{123} (2002), no.~2, 202--224. \MR{1900322
  (2003b:60016)}

\bibitem{Wojtaszczyk}
P.~Wojtaszczyk, \emph{Banach spaces for analysts}, Cambridge Studies in
  Advanced Mathematics, vol.~25, Cambridge University Press, Cambridge, 1991.
  \MR{1144277 (93d:46001)}

\end{thebibliography}

%%% AUTHOR: Include a short description of each author following the
%%% structure below. Use the same short tags used previously.  
%%% Use \imageat{} and \imagedot{} instead of "@" and "." in
%%% email addresses-this replaces the symbols with graphics to avoid 
%%% e-mail address harvesting from the .pdf file
\begin{dajauthors}
\begin{authorinfo}[elizabeth]
  Elizabeth S.\ Meckes\\
  Case Western Reserve University\\
  Cleveland, Ohio, USA\\
  elizabeth.meckes\imageat{}case\imagedot{}edu \\
  \url{https://www.case.edu/artsci/math/esmeckes/}
\end{authorinfo}
\begin{authorinfo}[mark]
  Mark W.\ Meckes\\
  Case Western Reserve University\\
  Cleveland, Ohio, USA\\
  mark.meckes\imageat{}case\imagedot{}edu \\
  \url{https://www.case.edu/artsci/math/mwmeckes/}
\end{authorinfo}
%\begin{authorinfo}[laci]
%  L\'aszl\'o Lov\'asz\\
%  Professor\\
%  E\"otv\"os Lor\'and University\\
%  Budapest, Hungary\\
% laci\imageat{}comb\imagedot{}elte\imagedot{}hu\\
%  \url{http://www.cs.elte.hu/~lovasz}
%\end{authorinfo}
%\begin{authorinfo}[andy]
%  Andrew Chi-Chih Yao\\
%  Professor\\
%  etc.
%\end{authorinfo}
\end{dajauthors}

\end{document}